\documentclass[a4paper,english]{lipics-v2021}
\nolinenumbers

\usepackage{thmtools}

\usepackage{booktabs}
\usepackage{xspace}
\usepackage{amstext,amssymb}
\usepackage{url}
\usepackage{stmaryrd}
\usepackage{apxproof}
\usepackage{enumerate}

\usepackage[dvipsnames]{xcolor}

\theoremstyle{plain}
\newtheoremrep{theorem}{Theorem}[section]
\newtheoremrep{lemma}[theorem]{Lemma}
\newtheoremrep{claim}[theorem]{Claim}
\newtheoremrep{corollary}[theorem]{Corollary}
\newtheoremrep{proposition}[theorem]{Proposition}
\newtheoremrep{conjecture}[theorem]{Conjecture}

\usepackage{tikz}
\usetikzlibrary{calc}

\usepackage{algpseudocode,algorithm}

\newcommand{\nc}{\newcommand}

\nc{\nat}{\ensuremath{\mathbb{N}}}
\nc{\V}{\ensuremath{\mathsf{Val}}}
\nc{\values}{\V}
\nc{\names}{\ensuremath{\mathsf{Names}}\xspace}
\nc{\attnames}{\ensuremath{\mathsf{Att}}}
\nc{\unary}{\ensuremath{\mathsf{Unary}}}
\nc{\len}{\ensuremath{\mathsf{len}}\xspace}

\nc{\SigmaCF}{\ensuremath{\Sigma_\text{CF}}\xspace}
\nc{\SigmaFR}{\ensuremath{\Sigma_\text{FR}}\xspace}

\nc{\arity}{\text{ar}\xspace}
\nc{\dom}{\text{Dom}\xspace}

\nc{\cR}{{\mathcal R}}
\nc{\lab}{\text{lab}}
\nc{\nfa}{\text{NFA}\xspace}
\nc{\product}{\text{product}\xspace}
\nc{\productautomaton}{\product}

\nc{\np}{NP\xspace}
\nc{\ptime}{PTIME\xspace}
\nc{\sharpp}{\#P\xspace}

\nc{\rpq}{\text{RPQ}\xspace}
\nc{\rpqs}{\text{RPQs}\xspace}

\nc{\sem}[1]{\ensuremath{\llbracket #1 \rrbracket}}

\theoremstyle{plain}

\nc{\todo}[1]{{\color{red}\textbf{TODO:} #1}}

\newcommand{\pspace}[1]{\textsc{pspace}\xspace}

\newcommand{\eat}[1]{}
\def\ar{\mathord{\mathit{ar}}}

\newcommand{\defeq}{\mathrel{\mathop:}=}

\def\word{\mathrm{word}}

\title{A Formal Language Perspective on Factorized Representations}

\author{Benny Kimelfeld}
{Technion, Haifa, Israel}
{bennyk@technion.ac.il}
{https://orcid.org/0000-0002-7156-1572}
{}

\author{Wim Martens}
{University of Bayreuth, Bayreuth, Germany}
{wim.martens@uni-bayreuth.de}
{https://orcid.org/0000-0001-9480-3522}
{Supported by ANR project EQUUS ANR-19-CE48-0019; funded by the Deutsche Forschungsgemeinschaft (DFG, German Research Foundation), project number 431183758.}

\author{Matthias Niewerth}
{University of Bayreuth, Bayreuth, Germany}
{matthias.niewerth@uni-bayreuth.de}
{https://orcid.org/0000-0003-2032-5374}
{}

\begin{document}

\authorrunning{Benny Kimelfeld, Wim Martens, and Matthias Niewerth}
\Copyright{Benny Kimelfeld, Wim Martens, and Matthias Niewerth}



\begin{CCSXML}
<ccs2012>
<concept>
<concept_id>10002951.10002952</concept_id>
<concept_desc>Information systems~Data management systems</concept_desc>
<concept_significance>500</concept_significance>
</concept>
</ccs2012>
\end{CCSXML}

\ccsdesc[500]{Information systems~Data management systems}

\keywords{Databases, relational databases, graph databases, factorized databases, regular path queries, compact representations}

\maketitle
\begin{abstract}
  Factorized representations (FRs) are a well-known tool to succinctly represent
  results of join queries and have been originally defined using the named
  database perspective. We define FRs in the unnamed database perspective and
  use them to establish several new connections. First, unnamed FRs can be
  exponentially more succinct than named FRs, but this difference can be
  alleviated by imposing a disjointness condition on columns. Conversely, named
  FRs can also be exponentially more succinct than unnamed FRs. Second, unnamed FRs are the
  same as (i.e., isomorphic to) context-free grammars for languages in which
  each word has the same length. This tight connection allows us to transfer a
  wide range of results on context-free grammars to database factorization; of
  which we offer a selection in the paper. Third, when we
  generalize unnamed FRs to arbitrary sets of tuples, they become a
  generalization of \emph{path multiset representations}, a formalism that was
  recently introduced to succinctly represent sets of paths in the context of
  graph database query evaluation.
\end{abstract}

\section{Introduction}

Factorized databases (FDBs) aim at succinctly representing the result of join
queries by systematically avoiding
redundancy. Since their introduction by Olteanu and Zavodny~\cite{OlteanuZ-icdt12,OlteanuZ-tods15}, they were the
inspiration and key technical approach toward the development of
algorithms for efficient query
evaluation~\cite{BakibayevOZ-vdlb12,OlteanuS-vldb16}, including the
construction of direct-access structures for join
queries~\cite{DBLP:journals/tods/CarmeliTGKR23,DBLP:journals/tods/CarmeliZBCKS22},
evaluation of aggregate
queries~\cite{BakibayevKOZ-vldb13,DBLP:conf/pods/Torunczyk20,NikolicO-sigmod18},
and the application of machine-learning
algorithms over databases~\cite{SchleichOC-sigmod16}.

At the core of factorized databases are factorized relations. In essence,
a \emph{factorized relation (FR)} is a relational algebra query that builds the represented
set of tuples. It involves data values and only two operators: \emph{union}
and \emph{Cartesian product}. The restriction to these two operators
provides succinctness and, at the same time, ensures the efficiency of
downstream operations. 
We refer to \cite{OlteanuS-SR}
for a gentle introduction into factorized databases.

Factorized relations have been introduced in the \emph{named database
  perspective}, where tuples are defined as functions from a set of
attribute names to a set of attribute values \cite{ABLMP21} and are therefore unordered. In
this paper, we explore FRs from the \emph{unnamed perspective}, where tuples are
ordered. Our motivation to explore this perspective is twofold. First, we
believe that tuples in many database systems are ordered objects and second, we
want to understand the relationship between factorized relations and
\emph{path multiset representations (PMRs)}.

\emph{Path Multiset Representations (PMRs)}
were recently introduced to succinctly represent \emph{(multi)sets of
  paths} in graph databases \cite{MartensNPRVV-vldb23}. In particular,
they aim at representing paths that match \emph{regular path queries}
(RPQs), which are the fundamental building block of modern graph
database pattern
matching~\cite{DeutschFGHLLLMM-sigmod22,FrancisGGLMMMPR-icdt23} and
have been studied for decades
\cite{BarceloFR-icalp19,BarceloHLW-pods10,BarceloLR-pods11,CalvaneseGLV-pods99,CalvaneseGLV-kr00,ConsensM-pods90,CruzMW-sigmod87,FigueiraGKMNT-kr20,Schmid-pods20}.
Compared to traditional research, modern graph query languages such as
Cypher~\cite{cypher,FrancisGGLLMPRS-sigmod18}, SQL/PGQ
\cite{sql-pgq,DeutschFGHLLLMM-sigmod22}, and GQL
\cite{gql,DeutschFGHLLLMM-sigmod22} use RPQs in a fundamentally new
way. In a nutshell, in most of the research literature, an RPQ $q$
returns pairs of endpoints of paths that are matched by $q$. In
Cypher, SQL/PGQ, and GQL, it is possible to return the actual paths
that match $q$~\cite{DeutschFGHLLLMM-sigmod22,FrancisGGLMMMPR-icdt23},
which is much less
explored~\cite{MartensNT-stacs20,MartensP-pods22,MartensT-icdt18}. The
challenge for PMRs is to succinctly represent the (possibly
exponentially many or even infinitely many) paths that match an RPQ,
and to allow query operations to be performed directly on the PMR.  In fact,
several experimental studies show that using PMRs can drastically speed up
query evaluation~\cite{MartensNPRVV-vldb23,pathfinder,pathfinderdemo}.

While FDBs and PMRs aim at the same purpose of succinctly representing
large (intermediate) results of queries, they are quite different. For
instance, factorized databases represent database relations, which are
tuples of the same length and which are always finite. Path multiset
representations represent (multi)sets of paths in graphs, with varying
lengths, and these sets can be infinite. Finally, even
though going from a fixed-length setting to a varying-length setting
seems like a generalization, it is not clear how PMRs
generalize factorized databases. 
Viewing FRs through the named database perspective however, will make the
relationship between FRs and PMRs much clearer.

Our contributions are the following. Let us use the term \emph{named factorized
  representations (nFR)} to refer to the \emph{$d$-representations}, also called
\emph{factorized representations with definitions}, introduced by Olteanu and
Zavodny~\cite{OlteanuZ-icdt12,OlteanuZ-tods15}. We define unnamed factorized
relations (uFRs) which are, analogously to the named case, relational algebra
expressions built from data values, union, and Cartesian product. Although uFRs
are conceptually very similar to nFRs, they are incomparable in size, since
worst-case exponential blow-ups exist in both directions. The blow-up from uFRs
to nFRs disappears, however, when we impose a disjointness condition on the
columns of uFRs. We then observe that there exists a bijection $\beta$ between
uFRs and a class of context-free grammars (CFGs) for languages in which all
words have the same length. Furthermore, each uFR $F$ is isomorphic to the
grammar $\beta(F)$ and its represented relation is the straightforward encoding
of the language of $\beta(F)$ as tuples. Loosely speaking, this means that uFRs
and this class of context-free grammars are the same thing.

This tight connection between uFRs and CFGs allows us to immediately infer a
number of complexity results on uFRs, e.g., on their membership problem, on their
equivalence problem (``Do two uFRs represent the same set of tuples?''), on the
counting problem (``How many different tuples are represented by a uFR?''),
their enumeration problem, and
on size lower bounds. It also allows us to generalize uFRs to a setting in which
database relations can contain tuples of different arities, which is a model
that is currently being implemented by RelationalAI~\cite{rel}.

Finally, it allows us to clarify the connection between uFRs and PMRs. Loosely
speaking, whereas uFRs are context-free grammars for uniform-length languages,
PMRs are non-deterministic finite automata. In this sense, PMRs are
 more expressive than uFRs, because they can represent infinite
objects. On uniform-length languages, however, uFRs are more general: PMRs and
uFRs can represent the same languages, but PMRs are a special case of uFRs.
Indeed, it is well-known that CFGs with rules only of the form $A \to bC$ and $A
\to b$ are isomorphic to non-deterministic finite automata.

Once we understand how PMRs and uFRs compare, we ask ourselves what are the
tradeoffs between them if one would use them as compact representations in the
same system. In principle, a CFG for a finite language can be converted into an
NFA or, conversely, an NFA representing a finite number of paths could be made
even more succinct as a CFG. Different representations may have different
benefits. For example, if we want to compute the complement of a relation $R$
represented by a uFR (i.e., a CFG), it may be a viable plan to convert the CFG
into a DFA and use the trivial complement operation on DFAs (slightly modified so
that we only take the complement on words of the relevant length). As such, even
a naive algorithm for complementation is able to avoid fully materializing
$R$ in some cases (for example, the cases where the CFG is right-linear). In
this paper, we embark on this (quite extensive) tradeoff question and
investigate relative blow-ups between the different relevant classes of
context-free grammars and automata.

\noindent \textbf{Further Related Work.} The present paper mainly aims at
connecting areas that were previously thought to be different (to the best of
our knowledge). In the named database perspective, factorized relations are
known to be closely connected to decision diagrams. (See,
e.g.,~\cite{AmarilliBJM17}.) A recent overview on connections between binary
decision diagrams (BDDs) and various kinds of automata was made by Amarilli et
al.~\cite{circuits}. The authors focus on translations that preserve properties
of interest, like the number of objects represented (variables and truth
assignments), even if the objects themselves change in the translation.
Here, on the other hand, we focus on much stronger correspondences, namely
isomorphisms. Put differently, we are interested in preserving not only
certain properties of the represented objects, but even the objects themselves.
Another
difference between our work and the research on circuits is that 
the latter usually represent
\emph{unordered} objects; we focus on \emph{ordered} tuples and paths (and
connect with factorizations for unordered tuples in Section~\ref{sec:named-unnamed}).

This paper is certainly not the first that considers context-free
grammars for compression purposes. In fact, the area of \emph{straight-line programs}, see,
e.g.,~\cite{CharikarLLPPSS05,GanardiJL21,Lohrey12} is completely centered around
this idea. The focus there, however, is on compactly representing a single word.
Factorized relations on the other hand focus on representing a database relation; or a
finite set of words.


\section{Factorized Relations}\label{sec:preliminaries}\label{sec:definitions}

Factorized relations were introduced by Olteanu and
Z\'avodn\'y~\cite{OlteanuZ-tods15}, who defined them in the named database
perspective, where tuples are unordered~\cite{ABLMP21}. We investigate them in
the unnamed perspective (i.e., for ordered tuples).
Appendix~\ref{app:named} contains a brief recap on the named and unnamed perspective.

Let $\values$ be a countably infinite set of \emph{values}.
A \emph{($k$-ary) database tuple} is an element $\vec a$ of $\V^k$. For every
$i=1,\dots,k$, we denote by $\vec a[i]$ the $i$th value of $\vec a$. A
\emph{database relation} is a finite set $R$ of tuples of the same arity. Let
$\names$ be a countably infinite set of \emph{names}, disjoint from $\values$.

\subsection{Unnamed Factorized Relations}
For defining \emph{unnamed factorized relations (uFR)}, we
follow the intuition that factorized relations are relational algebra
expressions that can use names to re-use subexpressions \cite{OlteanuZ-tods15}.
We also follow \cite{OlteanuZ-tods15}
in disallowing unions of $\emptyset$ with non-empty relations. Let
$X \subseteq \names$ be a finite set of \emph{expression names} and let $\ar
\colon X \to \nat$ be a function that associates an \emph{arity} to each name in
$X$. A \emph{relation expression that references $X$}, or \emph{$X$-expression}
for short, is a relational algebra expression built from singletons, products,
unions, and names from $X$. We inductively define $X$-expressions $E$ and their
associated arity $\ar(E)$ as follows:
\begin{description}
\item[(empty)] $E=\emptyset$ is an $X$-expression with $\ar(E) = 0$;
\item[(nullary tuple)] $E=\langle \rangle$ is an $X$-expression with  $\ar(E)=0$;
\item[(singleton)] $E=\langle a \rangle$ is an $X$-expression for each $a \in \V$, with
  $\ar(E)=1$;
\item[(name reference)] $E=N$ is an $X$-expression for each $N\in X$,  with $\ar(E) =
  \ar(N)$;
\item[(union)] for $X$-expressions $E_1, \ldots, E_n$ with $\ar(E_1) = \cdots = \ar(E_n)$, we have that $E=(E_1 \cup
  \cdots \cup E_n)$ is an $X$-expression with
   $\ar(E) = \ar(E_1) = \cdots = \ar(E_n)$;
 \item[(product)] for $X$-expressions $E_1, \ldots, E_n$, we have that $E=(E_1 \times \cdots
  \times E_n)$ is an $X$-expression with $\ar(E)=\sum_{i\in[n]}\ar(E_i)$.
\end{description}
\begin{definition}\label{def:uFR}
A \emph{$k$-ary unnamed factorized relation (uFR)} is a pair $(S,D)$, where $S \in \names$ is
the start symbol and $D = \{N_1 \defeq E_1, \ldots , N_n \defeq E_n\}$ is a set
of expressions where:
\begin{enumerate}
\item $N_1 = S$ and $\ar(N_1 = k)$;
\item each $N_i$ is an expression name;
\item each $E_i$ is an $X_i$-expression for $X_i=\{N_{i+1},\dots,N_{n}\}$; and
\item $\ar(N_{i})=\ar(E_i)$ for all $i=1,\dots,n$.
\end{enumerate}
\end{definition}
Note that the expression $E_n$ does not use name references. Hence,
$E_n$ can be evaluated without resolving references, and the result is
a relation $R_n$. Once we have $R_n$, we can construct $R_{n-1}$ from
$E_n$ by replacing $N_n$ with $R_n$. We continue this way until we
obtain $R_1$, which is a $k$-ary relation, and is the relation that
$F$ represents. We denote this relation, $R_1$, by $\sem{F}$. Furthermore, we
will denote $R_i$ by $\sem{N_i}$ for every $i \in [n]$.

Note that, since $D$ is a set, we should indicate which element is $S$. Indeed,
choosing a different start symbol changes $\sem{F}$ (and may make some parts of
$F$ useless since they cannot be reached from $S$). One may use a notational
convention that we always take $S$ to be the first name that we write down in
$D$, as is done in \cite[Section 4.2]{OlteanuZ-tods15}.

\begin{figure}[t]
  \centering
  \begin{tikzpicture}
    \node [baseline=(current bounding box.north)] at (0,0) {
      \begin{tabular}[t]{c@{\hspace{3mm}}c}
        \multicolumn{2}{l}{\texttt{Customer}}\\
        \toprule
        \texttt{cid} & \texttt{name}\\
        \midrule
        c1 & n1 \\
        c2 & n2\\
        c3 & n3\\
        \bottomrule
      \end{tabular}
    };

    \node [baseline=(current bounding box.north)] at (3.2,-.68) {
      \begin{tabular}[t]{cc}
        \multicolumn{2}{l}{\texttt{PurchaseHistory}}\\
        \toprule
        \texttt{cid} & \texttt{product}\\
        \midrule
        c1 & flute\\
        c1 & wire\\
        c2 & harp\\
        c2 & wire\\
        c3 & flute\\
        c3 & phone\\
        \bottomrule
      \end{tabular}
    };

    \node [baseline=(current bounding box.north)] at (7,-.68) {
      \begin{tabular}[t]{c@{\hspace{4mm}}c}
        \multicolumn{2}{l}{\texttt{Supply}}\\
        \toprule
        \texttt{product} & \texttt{supplier}\\
        \midrule
        flute & s1 \\
        flute & s2 \\
        wire & s3\\
        harp & s1 \\
        harp & s2 \\
        phone & s3 \\
        \bottomrule
      \end{tabular}
    };
  \end{tikzpicture}
  \caption{A database.}
  \label{fig:database}
\end{figure}

\begin{example}\label{ex:FR}
  Consider the database in Figure~\ref{fig:database}. We assume that the
  attributes of each relation are ordered from left to right, i.e., \texttt{cid}
  is the first attribute of \texttt{Customer}, and so on.
  The following is a
  factorized relation that represents \texttt{Customer} $\bowtie$
  \texttt{PurchaseHistory} $\bowtie$ \texttt{Supply}. (The ordering of rules
  is top-to-bottom, left-to-right; and the start symbol is $N_1$.)
  \begin{center}
    $
    \begin{array}{ll@{\hspace{1cm}}ll@{\hspace{1cm}}ll}
      N_1  & := A_1\cup A_2\cup A_3  &   B_2 & := \text{c2}\times C_2 & P_1  & := \text{flute}\times D \\
      A_1  & := B_1\times \text{n1}   &   B_3 & := \text{c3}\times C_3 & P_2 &:= \text{wire}\times \text{s3}\\
      A_2  & := B_2\times \text{n2}   &   C_1  & := P_1\cup P_2 & P_3  & := \text{harp}\times D \\
      A_3  & := B_3\times \text{n3}   &   C_2 & := P_2\cup P_3 & P_4 &:= \text{phone}\times \text{s3}\\
      B_1 & := \text{c1}\times C_1    &   C_3 & := P_1\cup P_4 & D & := \text{s1} \cup \text{s2}\\
    \end{array}
    $
  \end{center}
  Figure~\ref{fig:factorized} contains a visualization of the factorized
  relation. (We annotated some of the elements of \names in orange and omitted
  $\langle \cdot \rangle$ around data values.)
\end{example}

\begin{figure}
  \centering
  \begin{subcaptionblock}{.5\textwidth}
    \begin{tikzpicture}[level distance=9mm]
      \node at (0,0) {$\cup$}
      [sibling distance=2.5cm]
      child {node (A1) {$\times$}
        [sibling distance=5mm]
        child {node {} edge from parent[draw=none]}
        child {node (B1) {$\times$}
          [sibling distance=8mm]
          child {node {c1}}
          child {node (union1) {$\cup$}
            child {node (timesleft) {$\times$}
              child {node {flute}}
              child {node (unionleft) {$\cup$}
                child {node {s1}}
                child {node {s2}}
              }
            }
            child {node {} edge from parent[draw=none]}
          }
        }
        child[level distance=6mm] {node {n1}}
      }
      child {node (A2) {$\times$}
        [sibling distance=5mm]
        child {node {} edge from parent[draw=none]}
        child {node (B2) {$\times$}
          [sibling distance=8mm]
          child {node {c2}}
          child {node (union2) {$\cup$}
            child {node {} edge from parent[draw=none]}
            child {node (timesmiddle) {$\times$}
              child {node {harp}}
              child {node {} edge from parent[draw=none]}
            }
          }
        }
        child[level distance=6mm] {node {n2}}
      }
      child {node (A3) {$\times$}
        [sibling distance=5mm]
        child {node {} edge from parent[draw=none]}
        child {node (B3) {$\times$}
          [sibling distance=8mm]
          child {node {c3}}
          child {node (union3) {$\cup$}
            child {node (timesright) {$\times$}
              child {node {phone}}
              child {node {s3}}
            }
          }
        }
        child[level distance=6mm] {node {n3}}
      }
      ;

      \node (timesroot) at ($(timesleft)+(1.8,0)$) {$\times$}
      [sibling distance=6mm]
      child {node {wire}}
      child {node {s3}}
      ;

      \path
      (union1) edge (timesroot)
      (union2) edge (timesroot)
      (union3) edge[out=210,in=30] (timesleft)
      (timesmiddle) edge[out=-50,in=55] (unionleft)
      ;

      \node at ($(A1)+(-.4,.2)$) {\small \textcolor{Orange}{$A_1$}};
      \node at ($(A2)+(-.4,.2)$) {\small \textcolor{Orange}{$A_2$}};
      \node at ($(A3)+(.4,.2)$) {\small \textcolor{Orange}{$A_3$}};
      \node at ($(B1)+(-.4,.2)$) {\small \textcolor{Orange}{$B_1$}};
      \node at ($(B2)+(-.4,.2)$) {\small \textcolor{Orange}{$B_2$}};
      \node at ($(B3)+(-.4,.2)$) {\small \textcolor{Orange}{$B_3$}};
      \node at ($(union1)+(.4,.2)$) {\small \textcolor{Orange}{$C_1$}};
      \node at ($(union2)+(.4,.2)$) {\small \textcolor{Orange}{$C_2$}};
      \node at ($(union3)+(.4,.2)$) {\small \textcolor{Orange}{$C_3$}};
      \node at ($(timesleft)+(-.4,.2)$) {\small \textcolor{Orange}{$P_1$}};
      \node at ($(timesroot)+(-.4,0)$) {\small \textcolor{Orange}{$P_2$}};
      \node at ($(timesmiddle)+(.4,0)$) {\small \textcolor{Orange}{$P_3$}};
      \node at ($(timesright)+(.4,0)$) {\small \textcolor{Orange}{$P_4$}};
      \node at ($(unionleft)+(.3,.1)$) {\small \textcolor{Orange}{$D$}};
    \end{tikzpicture}
    \caption{Visualization of a factorized relation representing \texttt{Customer} $\bowtie$
      \texttt{PurchaseHistory} $\bowtie$ \texttt{Supplies}\\ \label{fig:factorized}}
  \end{subcaptionblock}\quad
  \begin{subcaptionblock}{.4\textwidth}
    \centering
    \def\val#1{\mbox{#1}}
    \parbox{2in}{
      \begin{align*}
        &  S  \to A_1\cup A_2\cup A_3 \\
        &A_1  \to B_1\cdot \val{n1}\quad\quad A_2  \to B_2\cdot \val{n2}\\
        &A_3  \to B_3\cdot \val{n3} \\
        &B_1  \to \val{c1}\cdot C_1 \quad\quad B_2 \to  \val{c2}\cdot C_2\\
        &B_3  \to \val{c3}\cdot C_3 \\
        &C_1  \to P_1\cup P_2 \quad\quad C_2 \to P_2\cup P_3\\
        &C_3 \to P_1\cup P_4 \\
 &P_1  \to \val{flute}\cdot D \quad\quad  P_2 \to \val{wire}\cdot \val{s3}\\
 &P_3  \to \val{harp}\cdot D \quad\quad   P_4 \to \val{phone}\cdot \val{s3}\\
 &D \to \val{s1} \cup \val{s2}
\end{align*}
      }
    \caption{Context-free grammar isomorphic to the factorized relation\label{fig:CFG}}
  \end{subcaptionblock}

  \caption{Factorized relation and context-free grammar representing the join of
    the relations in Figure~\ref{fig:database}\label{fig:isomorphism}}
\end{figure}

\subsection{Relationship to Named Factorized Relations}\label{sec:named-unnamed}
We refer to the \emph{$d$-representations} of Olteanu and Zavodny
\cite{OlteanuZ-tods15} as \emph{named factorized relations (nFR)}. 
For an nFR $F$, we use $\sem{F}$ to denote the named
relation $R$ represented by $F$.\footnote{$\sem{F}$ is formally defined in
  \cite{OlteanuZ-tods15}. The definition is completely analogous to
  Definition~\ref{def:uFR} but starts from tuples in the unnamed perspective. We
  therefore do not repeat the definition here.}
Although the definitions of named and unnamed factorized relations are 
similar, we show that unnamed factorized relations can be exponentially more
succinct than named factorized relations and vice versa. In the direction from
uFRs to nFRs, the exponential size difference is due to the capability of uFRs to be
able to factorize `horizontally' and can be alleviated by imposing that every
column uses different values. The exponential size difference in the other
direction is due to the unordered nature of nFRs and tuples in the named
perspective.

In order to compare uFRs with nFRs, we need to explain when we consider an nFR
and a uFR to be equivalent. We consider the standard conversion between named
and unnamed database relations in \cite[Chapter 2]{ABLMP21} and
write $\textsf{unnamed}(R)$ for the unnamed relation obtained from converting
the named relation $R$ to the unnamed perspective. (Intuitively, the conversion
assumes a fixed ordering on relation-attribute pairs and converts tuples
$\langle A_1\colon a_1, \ldots, A_k\colon a_k \rangle$ in a named relation $R$
into tuples $(a_1,\ldots,a_k)$.) Finally, we say that an nFR $F_1$ and uFR $F_2$
are \emph{equivalent} if $\textsf{unnamed}(\sem{F_1}) = \sem{F_2}$.

\begin{proposition}
  \begin{enumerate}[(1)]
  \item For each $n \in \nat$, there exists a uFR of size $O(n)$ such that the smallest
    equivalent nFR is exponentially larger.
  \item For each $n \in \nat$, there exists an nFR of size $O(n)$ such that the
    smallest equivalent uFR is exponentially larger.
  \end{enumerate}
\end{proposition}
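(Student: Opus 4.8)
The plan is to exhibit an explicit family of relations for each direction, construct the small representation by hand, and lower-bound the size of the smallest representation in the other formalism. The two directions exploit the two structural differences between uFRs and nFRs noted above: \emph{horizontal sharing} (direction~1) and the freedom to group attributes non-contiguously (direction~2).

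For part~(1) I would take the ``full cube'' relation $R_n = \{a,b\}^{2^n}$ of arity $2^n$ over two fixed values $a,b \in \V$. A uFR of size $O(n)$ is obtained by repeated doubling: put $M_0 \defeq \langle a\rangle \cup \langle b\rangle$ and $M_i \defeq M_{i-1}\times M_{i-1}$ for $1 \le i \le n$, with start symbol $M_n$. Listing the names in the order $M_n, M_{n-1},\dots,M_0$ respects the reference ordering of Definition~\ref{def:uFR}, and an easy induction gives $\sem{M_i} = \{a,b\}^{2^i}$, so $\sem{M_n} = R_n$. For the lower bound I would argue that any nFR $F_1$ with $\textsf{unnamed}(\sem{F_1}) = R_n$ is over exactly $2^n$ attributes $A_1,\dots,A_{2^n}$, each of which ranges over both $a$ and $b$. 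In a named factorized relation every value of an attribute $A_j$ is produced by a singleton $A_j\colon v$, and such a singleton is tied to the specific attribute $A_j$, so it cannot be shared across attributes. Hence $F_1$ contains a distinct singleton for each of the $2^n$ attributes and has size $\Omega(2^n)$. The point is precisely that the doubling trick reuses the single name $M_{i-1}$ in two different column blocks, which the named perspective forbids because a name is bound to a fixed attribute set. (Note that $R_n$ violates the column-disjointness condition, consistent with the claim that disjointness removes this blow-up.)

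For part~(2) I would use the copy relation $R'_n = \{(w,w) : w\in\{0,1\}^n\}$ of arity $2n$, with attributes ordered $A_1,\dots,A_n,A'_1,\dots,A'_n$. A small nFR groups each pair $\{A_i, A'_i\}$: set $E_i \defeq (A_i\colon 0 \times A'_i\colon 0)\cup(A_i\colon 1 \times A'_i\colon 1)$ and $S \defeq E_1 \times \cdots \times E_n$. The $E_i$ range over pairwise disjoint attribute sets, so the product is a legal named expression, and $\textsf{unnamed}(\sem{S}) = R'_n$ because the fixed attribute order sends $\langle A_i\colon c_i, A'_i\colon c_i\rangle_i$ to $(c_1,\dots,c_n,c_1,\dots,c_n)$. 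This nFR has size $O(n)$. The crucial feature is that the nFR pairs the non-adjacent columns $i$ and $n+i$, which a uFR cannot do: since the product of a uFR concatenates column blocks, in any derivation each name spans a contiguous range of columns.

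For the matching uFR lower bound I would pass through the correspondence between uFRs and context-free grammars for uniform-length languages: read as a set of length-$2n$ words, $R'_n$ is exactly the finite copy language $\{ww : |w| = n\}$, and a uFR for $R'_n$ is isomorphic to a CFG for this language. I would then invoke the classical fact that the copy language requires context-free grammars of size $2^{\Omega(n)}$ and transfer it to uFRs. I expect this uFR lower bound to be the main obstacle: unlike the nFR bound of part~(1), which is a direct ``every attribute must be mentioned'' counting argument, ruling out all small uFRs means ruling out all small CFGs, and CFGs are strictly more powerful than finite automata, so a plain Myhill–Nerode/residual argument at the midpoint cut does not suffice. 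The difficulty is that a nonterminal may span an interval straddling the midpoint, so one must control the parse-tree structure; the standard route is to locate, in each derivation of $ww$, a nonterminal whose yield forces the grammar to distinguish the $2^n$ possible first halves, yielding $2^{\Omega(n)}$ nonterminals.
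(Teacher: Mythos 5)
Your proposal is correct and matches the paper's proof essentially step for step: direction (1) uses the full Boolean cube of arity $2^n$ built by repeated doubling, with the nFR lower bound coming from the fact that an nFR must mention each of the $2^n$ attributes (the paper packages this as Claim~\ref{claim:nFR-size}, whose proof is the same arity-counting argument); direction (2) uses the copy relation with the pairwise $\{A_i,A'_i\}$ grouping and delegates the uFR lower bound to the known exponential CFG lower bound for the finite copy language, exactly as the paper does via Corollary~\ref{cor:lowerbound} (Filmus). The only cosmetic difference is that you gesture at re-deriving that CFG lower bound rather than citing it, which the paper does not attempt either.
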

\begin{proof}[Proof sketch]
  For (1), consider the unnamed factorized relation $F$ with the expressions
  \[N_1 := N_{2} \times N_{2} \qquad N_2 := N_3 \times N_3 \qquad \cdots \qquad
    N_{n-1} := N_n \times N_n \qquad N_n := \langle 0 \rangle \cup \langle 1
    \rangle \;.\] This uFR defines the set of all $2^n$-ary tuples with values in
  $\{0,1\}$. Notice that the uFR is exponentially smaller than the arities of the
  tuples that it defines, and doubly exponentially smaller than the number of
  tuples in $\sem{F}$, which is $2^{2^{n}}$. The following can be shown using a
  standard inductive argument.
  \begin{claim}\label{claim:nFR-size}
    For every nFR $F$, the size of $F$ is at most exponentially smaller than
    the size of $\sem{F}$.
  \end{claim}
  
  For (2), consider a relation $R[A_1,\ldots,A_n,B_1,\ldots,B_n]$ such that
  $R.A_1 < \cdots R.A_n < R.B_1 < \cdots < R.B_n$ in the attribite ordering for
  converting from the named to unnamed perspective. Consider the set of tuples
  $R = \{t \mid \pi_{A_1,\ldots,A_n} t = \pi_{B_1,\ldots,B_n} t \}$ in which all
  data values are $0$ or $1$. It is easy
  to construct an nFR for $R$ of size $O(n)$: take $S_1 := ((\langle A_1:0 \rangle \times
  \langle B_1:0 \rangle)) \cup ((\langle A_1:1 \rangle \times \langle B_1:1 \rangle))
  \times S_2$, $S_2 := ((\langle A_2:0 \rangle \times
  \langle B_2:0 \rangle)) \cup ((\langle A_2:1 \rangle \times \langle B_2:1 \rangle))
  \times S_3$, etc.
  The lower
  bound for uFRs is proved in Corollary~\ref{cor:lowerbound}.
\end{proof}

The significant size increase when going from the unnamed to the named
perspective exists because tuples in the named perspective are ``never
same in different columns''. Indeed, in the named perspective, tuples $\langle
A_1\colon a_1, \ldots, A_k\colon a_k \rangle$ are such that all attribute names
$A_i$ are pairwise distinct. Therefore, the composite values $A_i\colon a_i$ are
also pairwise distinct.

In the unnamed perspective, we can enforce the same restriction by requiring
that coordinates in tuples come from disjoint domains. (This notion is similar
to the notion of \emph{decomposability} in circuits \cite{circuits}.) We say that a relation $R$
\emph{has disjoint positions} if no value occurs in two different columns. More
precisely, if $\vec a[i]\neq\vec b[j]$ for all $\vec a,\vec b\in R$ and $1\leq
i<j\leq k$. We say that an unnamed factorized relation $F$ \emph{has disjoint
  positions} if $\sem{F}$ has disjoint positions.

\begin{toappendix}
Claim~\ref{claim:nFR-size} follows from item (2) in the following proposition.  
\begin{observation}\label{obs:disjoint-positions-blowup}
  \begin{enumerate}
  \item Let $F$ be an nFR. Then $|\sem{F}| \leq 2^{p(|F|)}$ for some polynomial $p$.
  \item Let $F$ be an uFR with disjoint positions. Then $|\sem{F}| \leq 2^{p(|F|)}$
    for some polynomial $p$.

  \end{enumerate}
\end{observation}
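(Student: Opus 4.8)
The plan is to prove both items at once with a single direct counting argument, with no induction on the structure of $F$: I will bound $|\sem F|$ by the product of the sizes of the active domains of its columns, and then use the relevant disjointness to show that this product is only singly exponential in $|F|$. First I would record the elementary fact that every value occurring in $\sem F$ is introduced by some singleton expression of $F$: the \textbf{(union)}, \textbf{(product)}, and \textbf{(name reference)} rules neither create nor alter values, so values can enter only through the \textbf{(singleton)} rule. Hence the set $W$ of all values occurring anywhere in $\sem F$ satisfies $|W| \le |F|$, since $|F|$ bounds the number of singletons written in $F$. Next, writing $R = \sem F$ and $V_i = \{\vec a[i] : \vec a \in R\}$ for the active domain of column $i$, we have $R \subseteq V_1 \times \cdots \times V_k$ and therefore $|R| \le \prod_{i=1}^{k} |V_i|$.

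The role of disjointness is to control $\sum_i |V_i|$. For item~(2), the hypothesis that $F$ has disjoint positions says precisely that $V_i \cap V_j = \emptyset$ for $i \ne j$, so $\sum_i |V_i| = |\bigcup_i V_i| \le |W| \le |F|$. For item~(1) the same holds in the named perspective, because each column value is a composite $\langle A_i : a\rangle$ tagged by its attribute, so values belonging to distinct attributes are automatically distinct; the tagged active domains are then pairwise disjoint and their total size is at most the number of attribute-tagged singletons of $F$, again at most $|F|$. In both cases, combining with the elementary inequality $\prod_i |V_i| \le \prod_i 2^{|V_i|} = 2^{\sum_i |V_i|}$ gives $|\sem F| \le 2^{|F|}$, which is the claimed bound with $p$ the identity.

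I expect the mathematical content to lie entirely in the argument above; the rest is bookkeeping. The two points that need care are: for item~(1), spelling out that a named $d$-representation introduces data only through attribute-tagged singletons, so that the per-attribute active domains genuinely are disjoint when regarded as sets of composite values, and matching this to the precise size measure $|F|$; and the degenerate cases, where an empty column forces $R=\emptyset$ and the bound holds trivially, together with the nullary and empty base expressions. The one conceptual point worth making explicit is that disjointness is exactly what excludes the doubly-exponential blow-up witnessed, in the absence of the hypothesis, by the uFR $N_1 := N_2\times N_2,\ \dots,\ N_n := \langle 0\rangle\cup\langle 1\rangle$ from part~(1) of the Proposition: reusing $N_2$ in two positions places the same values in different columns, violating disjoint positions, so that uFR (with $|\sem F| = 2^{2^{n}}$) is not covered by item~(2). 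Once this is seen, the product-of-active-domains bound does all the remaining work.
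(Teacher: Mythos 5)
Your proposal is correct and follows essentially the same route as the paper's own (very short) proof: disjointness of positions bounds the arity and the total active domain by $|F|$, and the relation is then counted as a subset of a product of column domains. Your per-column refinement $\prod_i |V_i| \le 2^{\sum_i |V_i|} \le 2^{|F|}$ even slightly sharpens the paper's bound of $|F|^{|F|} \in 2^{\mathcal{O}(|F|\log|F|)}$, but the underlying idea is identical.
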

\begin{proof}
  We first prove (2).
  If $F$ does not encode the empty relation, it has at least one symbol for each
  attribute. As $F$ has disjoint positions, we can conclude that the arity is
  bounded by $|F|$. The number of tuples in a relation is bounded by the number
  of symbols to the power of the arity. Therefore $|\sem{F}| \leq |F|^{|F|} \in
  2^{{\mathcal O} (|F| \log |F|)}$.

  The proof of (1) is analogous.
\end{proof}
\end{toappendix}

\begin{propositionrep}
  For each $n \in \nat$ and uFR of size $n$ with disjoint positions, there is an
  equivalent nFR of size $n$.
\end{propositionrep}
\begin{proof}[Proof sketch]
  If the uFR has disjoint positions, we can simply assign a fixed attribute name
  to each ``position'' in tuples. The construction is then straightforward
  (union remains union; and product remains product). Correctness follows from a
  simple inductive argument.
\end{proof}


\section{Context-Free Grammars and Their Connection to FRs}

Let $\Sigma$ be a finite set, whose elements we call \emph{symbols}. By
$\varepsilon$ we denote the empty word, that is, the word of length $0$. By
$\Sigma^*$ we denote the set of all words over $\Sigma$, i.e., the set of words
$w = a_1 \cdots a_n$, where $\emptyset$ are not elements of $\Sigma$. A
\emph{regular expression (RE)} over $\Sigma$ is inductively defined as follows.
Every $a \in \Sigma$ is a regular expression, and so are the symbols
$\varepsilon$ and $\emptyset$. Furthermore, if $e_1$ and $e_2$ are regular
expressions over $\Sigma$, then so are $e_1 \cdot e_2$ (concatenation), $e_1
\cup e_2$ (union), and $e_1^*$ (Kleene star). As usual, we often omit the
concatenation operator in our notation. When taking $e_1 \cup e_2$, we assume
that $e_1$ nor $e_2$ are $\emptyset$.\footnote{Indeed, a union with $\emptyset$
  is never useful and the restriction is easy to enforce. We make this
  restriction because unions in uFRs are defined with the same restriction.
  Theorem~\ref{theo:FR-ECFG-uniform}, which states that uFRs and uniform-length ECFGs are the same,
  also holds if unions in uFRs and REs both allow the empty set.} By $L(e)$ we
define the language of $e$, which is defined as usual.

\begin{definition}[see, e.g., \cite{MadsenK-acta76}]
An \emph{extended context-free grammar} (abbreviated \emph{ECFG}) consists of rules where
nonterminals are defined using arbitrary regular expressions over terminals and
nonterminals. Formally, an ECFG is a tuple $G = (T,N,S,R)$ where:
\begin{itemize}
\item $T$ is a finite set of \emph{terminals};
\item $N$ is a finite set of \emph{nonterminals} such that $T \cap N =
  \emptyset$;
\item $S \in N$ is the \emph{start symbol}; and
\item $R$ is a finite set of \emph{rules} of the form $A \rightarrow e$, where
  $e$ is a regular expression over $T\cup N$.
\end{itemize}
For the purpose of this paper, we will always choose $T \subseteq \V$ and $N
\subseteq \names$. Furthermore, we assume that all terminals in $T$ are actually
used in the grammar, that is, for each terminal $a$, there exists a rule $A
\rightarrow e$ such that $a$ appears in the expression $e$.
\end{definition}

The \emph{language of $G$}, denoted $L(G)$, is defined as usual. A
\emph{derivation step of $G$} is a pair $(u,v)$ of words in $(T \cup N)^*$ such
that $u = \alpha X \beta$ and $v = \alpha \gamma \beta$ where $X \in N$ and
$\alpha,\beta$, $\gamma \in (T \cup N)^*$, and where $R$ contains a rule $X \to
e$ with $\gamma \in L(e)$. We denote such a derivation step as $u \Rightarrow_G
v$. A \emph{derivation} is a sequence $u_0,\ldots,u_n$ such that $u_{i-1}
\Rightarrow_G u_i$ for every $i \in [n]$. We denote by $u \Rightarrow^* v$ that
there exists a derivation that starts in $u$ and ends in $v$ and by $u
\Rightarrow^+ v$ the case where this derivation has at least one step. By $L(u)$ we
denote the language $\{w \in T^* \mid u \Rightarrow_G w\}$. Finally, the
\emph{language of $G$}, denoted $L(G)$, is
the language $L(S)$ of words derived from the start symbol $S$.

A nonterminal $A \in N$ is \emph{useful} if there exists a derivation $S
\Rightarrow^*_G \alpha A \beta \Rightarrow^*_G w$ for some word $w \in
\Sigma^*$. The grammar $G = (T,N,S,R)$ is \emph{trimmed} if every nonterminal in
$N$ is useful. It is well-known that an ECFG can be converted into a trimmed ECFG
in linear time.
A grammar $G$ is \emph{recursive} if there exists a derivation
$\alpha A \beta \Rightarrow^+_G \alpha' A \beta'$ for some $A \in N$ and
$\alpha,\beta,\alpha',\beta' \in (T\cup N)^*$.

\begin{remark}
  \emph{Context-free grammars (CFGs)} are defined analogously to ECFGs, except
  that rules are required to be of the form $A \rightarrow \alpha_1 \cup \cdots
  \cup \alpha_n$,
  where each $\alpha_i$ is a concatenation over $T \cup N$. It
  is well known that CFGs and ECFGs have the same expressiveness and that they
  can be translated back and forth in linear time \cite[p.~202]{HMU-automata}.
\end{remark}

\subsection{Isomorphisms}\label{sec:isomorphism}
In this section we want to define when we consider a uFR and an ECFG to be
isomorphic. To warm up, we first explain when we consider two ECFGs $G_1$ and $G_2$ to be
isomorphic. Intuitively, this is the case when they are the
same up to renaming of non-terminals. Formally, we define isomorphisms using a
function $h \colon \names \to \names$ that we extend to regular expressions as
follows:
\[
\begin{array}{rcll@{\hspace{1cm}}rcl}
  h(\emptyset) &=& \emptyset &&   h(e_1 \cdot e_2) &=& h(e_1) \cdot h(e_2) \\
  h(\varepsilon) &=& \varepsilon &&   h(e_1 \cup e_2) &=& h(e_1) \cup h(e_2) \\
  h(a) &= & a & \text{ for every $a$ in \values} & h(e^*) &=& h(e)^*\\
\end{array}
\]
Then, $G_1 = (T_1,N_1,S_1,R_1)$ is \emph{isomorphic} to $G_2 = (T_2,N_2,S_2,R_2)$ if
there is a bijective function $h \colon N_1 \to N_2$ such that $h(S_1) = S_2$, for
each rule $A \to e$ in $R_1$, the rule $h(A) \to h(e)$ is in
$R_2$, and each rule in $R_2$ is of the form $h(A) \to h(e)$ for some rule $A
\to e$ in $R_1$.

\begin{example}
  (Extended) context-free grammars are isomorphic if and only if they are the
  same up to renaming of non-terminals. For example, the grammars
  \begin{center}
    $\begin{array}{rl@{\hspace{.5cm}}rl}
       S & \to ASB \cup C^* & A & \to a\\
       B & \to b & C & \to c\\
    \end{array}$
    \qquad \qquad and \qquad \qquad
    $\begin{array}{rl@{\hspace{.5cm}}rl}
       S & \to XSY \cup Z^* & X & \to a\\
       Y & \to b & Z & \to c\\
     \end{array}$
   \end{center}
  (both with start symbol $S$) are isomorphic. They recognize $\{a^n b^n \mid n \in \nat \} \cup L(c^*)$.
\end{example}
\begin{observation}\label{obs:grammar-isomorphism}
  If $G_1$ and $G_2$ are isomorphic, then $L(G_1) = L(G_2)$.
\end{observation}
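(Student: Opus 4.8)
The plan is to lift the alphabet-level bijection $h$ to a bijection on words and show that it carries $G_1$-derivations to $G_2$-derivations and back. First I would extend $h$ to a monoid homomorphism $h \colon (T_1 \cup N_1)^* \to (T_2 \cup N_2)^*$ by applying it symbol by symbol. Since $h$ is a bijection on nonterminals and the identity on terminals (and the ``all terminals are used'' assumption forces $T_1 = T_2 =: T$, as every terminal of $R_2$ occurs in some $h(e)$ and hence already in $R_1$), this extension is itself a bijection, whose inverse is the homomorphic extension of $h^{-1}$. Crucially, $h$ fixes every terminal word: $h(w) = w$ for all $w \in T^*$.

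The technical core is a lemma at the level of regular-expression languages: for every regular expression $e$ over $T \cup N_1$ one has $L(h(e)) = h(L(e))$, where $h(L(e)) = \{h(w) \mid w \in L(e)\}$. I would prove this by structural induction on $e$, settling the base cases $\emptyset$, $\varepsilon$, terminals, and nonterminals directly from the definition of $h$ on expressions, and using in the inductive cases that $h$, being a monoid homomorphism on words, commutes with language concatenation, union, and Kleene star. This is the step that genuinely exploits that ECFG rules carry \emph{regular expressions} rather than plain concatenations: a single derivation step may replace a nonterminal by any one of the (possibly infinitely many) words of $L(e)$, so preservation of derivation steps cannot be argued by a naive symbol substitution and must instead pass through the languages $L(e)$.

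With this lemma in hand, the rest is bookkeeping. For single steps I would show $u \Rightarrow_{G_1} v$ iff $h(u) \Rightarrow_{G_2} h(v)$: writing $u = \alpha X \beta$ and $v = \alpha \gamma \beta$ with a rule $X \to e$ and $\gamma \in L(e)$, the isomorphism supplies the rule $h(X) \to h(e)$ in $R_2$, and the lemma together with injectivity of $h$ gives $\gamma \in L(e) \Leftrightarrow h(\gamma) \in L(h(e))$; the converse direction follows symmetrically, since $G_2$ is isomorphic to $G_1$ via $h^{-1}$. Iterating over the length of a derivation then yields $u \Rightarrow^*_{G_1} v$ iff $h(u) \Rightarrow^*_{G_2} h(v)$. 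Finally, because $h(S_1) = S_2$ and $h$ fixes terminal words, for every $w \in T^*$ we obtain $S_1 \Rightarrow^*_{G_1} w$ iff $S_2 \Rightarrow^*_{G_2} w$, that is, $L(G_1) = L(G_2)$.

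I expect the only real obstacle to be stating and proving the regular-expression lemma $L(h(e)) = h(L(e))$ cleanly, together with the small but essential observation that the isomorphism relation is symmetric (witnessed by $h^{-1}$), which is exactly what upgrades the one-directional ``$h$ preserves derivations'' argument into an equality of languages. Everything else --- the homomorphic extension of $h$, its injectivity, and the induction on derivation length --- is routine.
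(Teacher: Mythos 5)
Your proof is correct. The paper states this as an observation with no proof at all, treating it as immediate from the definitions; your argument --- extending $h$ to a monoid homomorphism on $(T\cup N_1)^*$, establishing $L(h(e))=h(L(e))$ by induction on the regular expression $e$, and transferring derivations step by step (with the converse inclusion via the inverse isomorphism $h^{-1}$) --- is exactly the routine justification the authors leave implicit, and it fills in the details correctly.
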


We want to extend this notion of isomorphism to factorized relations and want to
maintain a property such as Observation~\ref{obs:grammar-isomorphism} which says
that, if the objects are syntactically isomorphic, also their semantics is the
same. 
To this end, for a tuple $t = \langle a_1,\ldots,a_k \rangle$, we denote
the word $a_1 \cdots a_k$ as $\word(t)$. Hence, $\word(\langle \rangle) =
\varepsilon$. For a set $T$ of tuples, we define $\word(T) \defeq \{\word(t)
\mid t \in T\}$.

We now define isomorphisms between uFRs and (star-free) ECFGs. The idea is
analogous as before, but with the difference that the Cartesian product operator
($\times$) in uFRs is replaced by the concatenation operator ($\cdot$) in ECFGs.
That is, we define isomorphisms using a function $h \colon \names \to \names$
that we extend to $X$-expressions as follows:
\[
  \begin{array}{rcll@{\hspace{1cm}}rcl}
    h(\emptyset) &=& \emptyset &&   h(E_1 \times E_2) &=& h(E_1) \cdot h(E_2) \\
    h(\langle  \rangle) &=& \varepsilon &&   h(E_1 \cup E_2) &=& h(E_1) \cup h(E_2) \\
    h(\langle a \rangle) &= & a & \text{ for every $a$ in \values}\\
  \end{array}
\]
A uFR $(N_1,D)$ with $D = \{N_1 \defeq E_1, \dots, N_n \defeq E_n \}$ is \emph{isomorphic} to an ECFG
$G=(T,N,S,R)$ if there is a bijective function $h \colon \{N_1,\dots,N_n\} \to N$ such
that 
\begin{itemize}
\item the start symbol $S$ is $h(N_1)$;
\item for every expression $N_i \defeq E_i$ in $D$, the rule $h(N_i) \rightarrow
  h(E_i)$ is in $R$; and
\item for every rule $N \rightarrow e$ in $R$, there is an expression $N_i
  \defeq E_i$ in $D$ with $h(N_i)=N$ and $h(E_i)=e$.
\end{itemize}

\begin{example}
  Consider the uFR in Example~\ref{ex:FR}, which is visualized in
  Figure~\ref{fig:factorized}. It is routine to check that it is isomorphic to
  the extended context-free grammar in Figure~\ref{fig:CFG}. In fact, the isomorphism
  in this case is the identity function.
\end{example}

We now observe that isomorphisms preserve the size and, in a strong sense, also the
semantics of uFRs and ECFGs. To this end, the \emph{size} $|E|$ of an
$X$-expression or regular expression $E$ is defined to be the number of
occurrences of symbols plus the number of occurrences of operators in $E$. For
example, $|\langle a \rangle| = 1$, $|(\langle a \rangle \cup (\langle \rangle
\times \langle b \rangle)| = 5$, and $|(a \cdot b)^*| = 4$. The size of a uFR
(resp., ECFG) is the sum of the sizes of its $X$-expressions (resp., regular
expressions). 
\begin{proposition}
  If $h$ is an isomorphism from a uFR $(N_1,D)$
  to an ECFG $(T,N,S,R)$ then, for each expression $N_i := E_i$ in $D$ such
  that $A_i = h(N_i)$ and $e_i = h(E_i)$, we have that
  \begin{enumerate}[(a)]
  \item $|e_i| = |E_i|$ and 
  \item $L(A_i) = \word(\sem{N_i})$.
  \end{enumerate}
\end{proposition}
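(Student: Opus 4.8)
The plan is to prove (a) and (b) separately, reducing everything to two elementary facts: that the rewriting clauses of $h$ turn operators into operators and constants into constants, and that applying $\word$ to a Cartesian product of relations is the same as concatenating the corresponding word languages.

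For (a) I would argue by structural induction on $E_i$. Each base clause of $h$ sends a single constant to a single constant ($\langle\rangle \mapsto \varepsilon$, $\langle a\rangle \mapsto a$, $\emptyset \mapsto \emptyset$) and hence contributes size $1$ on both sides, while each inductive clause sends exactly one operator to exactly one operator ($\times \mapsto \cdot$, $\cup \mapsto \cup$) and leaves the operands to the induction hypothesis. Since $|\cdot|$ counts symbol occurrences plus operator occurrences, and $h$ preserves both counts clause by clause, we obtain $|h(E_i)| = |E_i|$, i.e. $|e_i| = |E_i|$.

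For (b) I would run an outer induction on $i$ from $n$ down to $1$, mirroring the way $\sem{N_i}$ is computed in Definition~\ref{def:uFR}. The crucial tool is an auxiliary claim, proved by an inner structural induction: for every $X_i$-expression $E$ (which references only $N_{i+1},\dots,N_n$), assuming $L(h(N_j)) = \word(\sem{N_j})$ for all $j>i$, we have $\word(\sem{E}) = D(h(E))$, where $D(e)$ denotes the fully-derived language $\{w \in T^* \mid e \Rightarrow^*_G w\}$ of the regular expression $e$ over $T\cup N$, obtained by expanding every word of the surface language $L(e)$ through the grammar. The base cases $\langle\rangle,\langle a\rangle,\emptyset$ are immediate; the name case $E = N_j$ is exactly the outer induction hypothesis, since $D(h(N_j)) = L(h(N_j))$; and the two inductive cases use that $D$ distributes over concatenation and union, combined with $\word(R_1 \times R_2) = \word(R_1)\cdot\word(R_2)$ and $\word(R_1 \cup R_2) = \word(R_1)\cup\word(R_2)$. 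Applying this claim to $E = E_i$ and using $\sem{N_i} = \sem{E_i}$ gives $\word(\sem{N_i}) = D(e_i)$; and since the isomorphism guarantees that $A_i = h(N_i)$ has $A_i \to e_i$ as its unique defining rule, also $L(A_i) = D(e_i)$. Hence $L(A_i) = \word(\sem{N_i})$, closing the outer induction.

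The routine parts are the size bookkeeping and the base cases. The step that needs genuine care is justifying the two structural properties of $D$: that the derived language of a regular expression over $T\cup N$ distributes over $\cdot$ and $\cup$, so that $D(h(E_1)\cdot h(E_2)) = D(h(E_1))\cdot D(h(E_2))$ and likewise for union. This rests on the context-freeness of the grammar, namely that the nonterminals occurring in disjoint positions of a word derive independently, and on $h(E)$ being star-free so that every surface word is a finite concatenation of terminals and nonterminals. I would also make explicit the identity $\word(R_1 \times R_2) = \word(R_1) \cdot \word(R_2)$, which holds because concatenating the tuples of a Cartesian product is exactly concatenating their words, and note that the uniqueness of each nonterminal's defining rule (inherited from $D$ being a set of expressions with distinct left-hand names and $h$ being a bijection) is precisely what lets us identify $L(A_i)$ with $D(e_i)$.
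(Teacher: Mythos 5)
Your proof is correct. The paper actually states this proposition without any proof (treating it as routine, much like the appendix's Lemma on $\beta$ which is dismissed with ``follows by an easy induction argument''), and your two-layer induction --- clause-by-clause size preservation for (a), and the outer induction on $i$ combined with the inner structural induction using $\word(R_1\times R_2)=\word(R_1)\cdot\word(R_2)$ and the distributivity of the derived language over $\cdot$ and $\cup$ for (b) --- is exactly the argument the authors intend and correctly fills in the omitted details.
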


\begin{corollary}
  If a uFR $F$ and ECFG $G$ are isomorphic then they have the same size and 
  $L(G) = \word(\sem{F})$.
\end{corollary}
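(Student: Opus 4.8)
The plan is to derive both claims directly from the preceding Proposition, which already does the real work by establishing a per-rule correspondence between sizes and languages. First I would record that, by the definition of isomorphism between a uFR and an ECFG, the map $h$ restricts to a bijection between the expressions $N_i \defeq E_i$ in $D$ and the rules in $R$: every expression in $D$ yields the rule $h(N_i) \to h(E_i)$ of $R$, and every rule of $G$ arises in this way from a unique expression $N_i \defeq E_i$. This matching is what lets me lift the per-rule statements (a) and (b) to global statements about $F$ and $G$.

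For the size equality, I would sum part (a) of the Proposition over all expressions. Since the size of a uFR is by definition $\sum_i |E_i|$ and the size of an ECFG is $\sum_i |e_i|$ with $e_i = h(E_i)$, and since $|e_i| = |E_i|$ for every $i$, the two sums range over index sets identified by the bijection and agree term by term. Hence $F$ and $G$ have the same size.

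For the language equality, I would specialize part (b) to the start symbol. By the construction of the semantics of $F = (N_1, D)$ we have $\sem{F} = \sem{N_1}$, and by the definition of isomorphism the start symbol of $G$ is $S = h(N_1) = A_1$, so $L(G) = L(S) = L(A_1)$. Applying part (b) at $i = 1$ gives $L(A_1) = \word(\sem{N_1})$, and chaining these identities yields $L(G) = \word(\sem{F})$, as required.

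I do not anticipate any genuine obstacle: the substance lives entirely in the Proposition, and the corollary amounts to summing (a) over all rules and instantiating (b) at the start symbol. The only point deserving a sentence of care is that the isomorphism supplies an exact one-to-one matching between the expressions of $D$ and the rules of $R$, so that the two size sums share the same index set with no term omitted or double-counted, and that the start symbol of $G$ is precisely the image $h(N_1)$ of the designated start name of $F$.
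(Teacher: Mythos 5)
Your proposal is correct and follows exactly the route the paper intends: the corollary is stated as an immediate consequence of the preceding proposition, obtained by summing part (a) over the bijectively matched expressions and rules to get equality of sizes, and instantiating part (b) at the start symbol (using $S = h(N_1)$ and $\sem{F} = \sem{N_1}$) to get $L(G) = \word(\sem{F})$. No gaps; nothing further is needed.
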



\subsection{FRs and ECFGs are Isomorphic on Database Relations}\label{sec:uniform-length}

A factorized relation $F$ defines a database relation, where all tuples have the
same arity. An ECFG defining the same relation (i.e., $\word(\sem{F})$) defines
a language in which each word has the same length (which is, in particular, finite).

Let $G = (V,N,S,R)$ be an ECFG. Notice that, if $G$ is trimmed and $L(G)$ is
finite, then $G$ cannot use the Kleene star operator in a meaningful way. Indeed,
it can only use subexpressions $e^*$ if $L(e) = \emptyset$ or $L(e) =
\varepsilon$, in which case $L(e^*) = \varepsilon$. The same holds for
recursion. We therefore assume from now
on that ECFGs that define a finite language are non-recursive and do not use the Kleene star operator.
We call a
nonterminal $A \in N$ \emph{uniform length} if every word $w \in L(A)$ has the
same length. We say that $G$ is \emph{uniform length} if $S$ is uniform length.
We now prove that factorized relations are the same as
uniform-length ECFGs.

\begin{toappendix}
  The proofs for the
  Theorems~\ref{theo:FR-ECFG-uniform},~\ref{theo:FR-ECFG-uniform-disjoint}
  and~\ref{theo:FR-ECFG-variable} are very similar. In particular they all use
  the same mapping $\beta$ from uFRs to ECFGs. Before we do the proofs we like to
  introduce $\beta$ and show that indeed for every uFR $F$ it holds that the ECFG
  $\beta(F)$ is isomorphic to $F$.

  The bijection $\beta$ that we use to map every uFR to an ECFG is induced by the
  identity function on $\names$. Thus, let $(S,D)$ be an uFR with $D=\{N_1
  \defeq E_1, \dots, N_n \defeq E_n\}$. Then
  \begin{align*}
    \beta(S,D) &= (T,N,S,R) \quad \text{where}\\
    T &= \{a \mid \langle  a \rangle \text{ appears in some expression } E_i \text{ in } D\}\\
    N &= \{{ N_1,\dots,N_n}\} \\
    R &= \{ N_1 \rightarrow h(E_1), \dots N_n \rightarrow h(E_n)\}
  \end{align*}

  \begin{lemma}\label{lem:isomorphic}
    Let $F=(S,D)$ be some uFR, then $G=\beta(F)$ is a non-recursive and star-free
    ECFG, such that $F$ and $G$ are isomorphic.
  \end{lemma}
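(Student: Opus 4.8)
The plan is to verify the claim in four stages, matching its four assertions: that $\beta(F)$ is a well-formed ECFG, that it is star-free, that it is non-recursive, and that the identity map on names witnesses an isomorphism between $F$ and $\beta(F)$. The first, second, and fourth stages are essentially bookkeeping against the definitions, so I expect the non-recursiveness argument to be the only part carrying genuine mathematical content.

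First I would check that $G = \beta(F) = (T,N,S,R)$ meets the definition of an ECFG. Since $T \subseteq \V$ and $N = \{N_1,\dots,N_n\} \subseteq \names$ with $\V \cap \names = \emptyset$, we have $T \cap N = \emptyset$, and $S = N_1 \in N$. A straightforward structural induction on $X$-expressions shows that $h(E)$ is a regular expression over $T \cup N$ for every $X$-expression $E$: the base cases $\emptyset$, $\langle \rangle$, $\langle a \rangle$, $N$ map to $\emptyset$, $\varepsilon$, $a$, $N$, and the inductive clauses send $\times$ to $\cdot$ and $\cup$ to $\cup$. Hence each rule $N_i \to h(E_i)$ has the required form. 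Because $T$ collects exactly the values occurring in the $E_i$, every terminal of $T$ occurs in some $h(E_i)$, so the ``all terminals are used'' condition holds. The same induction yields star-freeness: the grammar of $X$-expressions has no Kleene-star constructor and the defining clauses of $h$ introduce no star, so no $h(E_i)$ contains a star.

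For non-recursiveness I would exploit the layered structure built into Definition~\ref{def:uFR}: each $E_i$ is an $X_i$-expression with $X_i = \{N_{i+1},\dots,N_n\}$, so the only names occurring in $E_i$ — and therefore, since $h$ is the identity on names, the only nonterminals occurring in $h(E_i)$ — have strictly larger index than $i$. I would turn this into the invariant that if $N_i \Rightarrow^+_G \gamma$, then every nonterminal occurring in $\gamma$ has index strictly greater than $i$, proved by induction on derivation length: the first step replaces $N_i$ by a word over $T \cup \{N_{i+1},\dots,N_n\}$, and any later step rewrites some $N_j$ with $j > i$ into a word whose nonterminals again have index $> j > i$. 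In particular $N_i$ never reappears, so no nonterminal can derive a sentential form mentioning itself, which is exactly non-recursiveness. I expect this inductive invariant to be the crux; the remainder is only careful setup.

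Finally, I would exhibit the isomorphism. By construction $N = \{N_1,\dots,N_n\}$, so the identity function $h \colon \{N_1,\dots,N_n\} \to N$ is a bijection, it sends the start symbol $N_1$ of $F$ to the start symbol $S = N_1$ of $G$, and since $R = \{N_1 \to h(E_1),\dots,N_n \to h(E_n)\}$ is defined using precisely this $h$, both remaining conditions of the uFR--ECFG isomorphism hold immediately: each expression $N_i \defeq E_i$ yields the rule $h(N_i) \to h(E_i)$ in $R$, and each rule of $R$ arises in this way. This completes the plan.
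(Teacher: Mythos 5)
Your proposal is correct and matches the paper's approach: the paper dismisses this lemma with ``follows by an easy induction argument,'' and your structural induction on $X$-expressions (for well-formedness and star-freeness), the index-monotonicity invariant (for non-recursiveness), and the identity map as the witnessing bijection are exactly the details that argument would unfold to. No gaps.
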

  \begin{proof}
    The statement follows by an easy induction argument.
  \end{proof}

  Sometimes we need the inverse mapping of $\beta$, denoted by $\beta^{-1}$
  which is the following:
  \begin{align*}
    \beta^{-1}(T,N,S,R) &= (S,D) \quad \text{where}\\
    D &= \{ N_1 \defeq h^{-1}(E_1),\dots,N_n \defeq h^{-1}(E_n)\}
  \end{align*}

  Here $h^{-1}$ denotes the inverse mapping of $h$, which is the following:
\[
  \begin{array}{rcll@{\hspace{1cm}}rcl}
    h^{-1}(\emptyset) &=& \emptyset &&   h^{-1}(E_1 \cdot E_2) &=& h^{-1}(E_1) \times h^{-1}(E_2) \\
    h^{-1}(\varepsilon) &=& \langle \rangle &&   h^{-1}(E_1 \cup E_2) &=& h^{-1}(E_1) \cup h^{-1}(E_2) \\
    h^{-1}(a) &= & \langle a \rangle  & \text{ for every $a$ in \values}\\
  \end{array}
\]
\end{toappendix}

\begin{theoremrep}\label{theo:FR-ECFG-uniform}
  There is a bijection $\beta$ between the set of uFRs and the set of
  uniform-length ECFGs
  such that each factorized relation
  $F$ is isomorphic to $\beta(F)$.
\end{theoremrep}
\begin{proof}
  By Lemma~\ref{lem:isomorphic}, it holds that for every uFR $F$, $\beta(F)$ is
  isomorphic to $F$. Furthermore, $\beta$ is clearly a total and
  injective function on the domain of all uFRs and all non-recursive and
  star-free ECFGs. It remains to show that $\beta$ is surjective, i.e., for
  every uniform-length, nonrecursive, and star-free ECFG $G=(T,N,S,R)$, there is
  a uFR $F=(S,D)$ such that $\beta(S,D)=G$.

  Indeed, we let $F= \beta^{-1}(G)$. By definition of $\beta$, it is obvious
  that $\beta(G)=F$. Furthermore $F$ is a valid uFR. As $G$ is non-recursive, the
  names can be topologically sorted in such a way that all name references in
  the expression $E_i$ only reference names $N_j$ with $j>i$. Furthermore, as
  $G$ is uniform length, the arity of each expression is well defined.
\end{proof}

\begin{toappendix}
Since every $N_i$ in $D$ has an associated arity, we have the following
corollary for context-free grammars.
\begin{corollary}
  If $G$ is uniform length, then every non-terminal in $G$ is uniform length.
\end{corollary}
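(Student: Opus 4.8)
The plan is to transport the statement through the bijection $\beta$ of Theorem~\ref{theo:FR-ECFG-uniform} and then read off uniform length from the arities of the associated uFR. Since $G$ is uniform length, all words of $L(G)$ share a common length $\ell$, so $L(G)$ is finite and, by the standing convention for finite-language ECFGs, $G$ is non-recursive and star-free. I would first assume $G$ is trimmed, so that every nonterminal is useful; this assumption is genuinely needed, since a useless nonterminal could derive words of differing lengths and would violate the statement (and would also prevent $\beta^{-1}(G)$ from being a valid uFR). Under these hypotheses, Theorem~\ref{theo:FR-ECFG-uniform} together with Lemma~\ref{lem:isomorphic} gives a uFR $F \defeq \beta^{-1}(G)$ that is isomorphic to $G$ via a bijection $h \colon \{N_1,\dots,N_n\} \to N$.

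With $F$ in hand, I would appeal to the structure of uFRs from Definition~\ref{def:uFR}: each name $N_i$ carries a fixed arity $\ar(N_i)$, and $\sem{N_i}$ is a relation all of whose tuples have arity exactly $\ar(N_i)$; hence every word of $\word(\sem{N_i})$ has length $\ar(N_i)$. The earlier proposition relating isomorphic uFRs and ECFGs yields $L(A_i) = \word(\sem{N_i})$ for $A_i = h(N_i)$, so every word of $L(A_i)$ has length $\ar(N_i)$, that is, $A_i$ is uniform length. As $h$ maps onto $N$, every nonterminal of $G$ equals some $A_i$ and is therefore uniform length.

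The step that needs real care --- and where an apparent circularity with the proof of Theorem~\ref{theo:FR-ECFG-uniform} lurks --- is the claim that $\beta^{-1}(G)$ is a legitimate uFR, i.e.\ that all arities are well defined even when a single nonterminal occurs in several rules and positions. I would discharge this directly rather than by citing the theorem. Suppose a useful nonterminal $A$ derived two words $v_1, v_2$ with $|v_1| \neq |v_2|$. Fix a derivation $S \Rightarrow^* \alpha A \beta$ and, using that $G$ is trimmed, terminal words $u \in L(\alpha)$ and $x \in L(\beta)$. Then $u v_1 x$ and $u v_2 x$ both lie in $L(S)$ but have different lengths, contradicting that $S$ is uniform length. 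This contextual argument shows every useful nonterminal is uniform length without presupposing the bijection; the arities in $\beta^{-1}(G)$ are then well defined, and the uFR reformulation of the preceding paragraphs simply repackages the conclusion in terms of $\ar(N_i)$.
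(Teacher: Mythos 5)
Your proof is correct, and the part that does the real work is genuinely different from what the paper offers. The paper's entire justification is the one-line remark preceding the corollary (``Since every $N_i$ in $D$ has an associated arity\dots''), i.e.\ exactly the transport through $\beta$ that you describe in your first two paragraphs. You are right that this is circular as stated: the proof of Theorem~\ref{theo:FR-ECFG-uniform} justifies surjectivity of $\beta$ by asserting that ``as $G$ is uniform length, the arity of each expression is well defined,'' which is precisely the content of the corollary, and the corollary is then read off from the theorem. Your third paragraph supplies the missing direct argument --- take a useful nonterminal $A$ deriving $v_1, v_2$ of different lengths, embed them in a context $S \Rightarrow^* \alpha A \beta$ with terminal words $u \in L(\alpha)$, $x \in L(\beta)$, and contradict uniform length of $S$ --- and this is the honest, self-contained proof that the paper should arguably contain. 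You are also right that trimming (or at least usefulness of every nonterminal) is a genuine hypothesis: an unreachable nonterminal with rules deriving words of two different lengths falsifies the statement as literally written, and the paper only handles this implicitly. What your approach buys is a proof that does not presuppose the bijection and that makes the needed hypotheses explicit; what the paper's approach buys is brevity, at the cost of the circularity you identified.
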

\end{toappendix}

Similarly, we
say that a uniform-length ECFG $G$ \emph{has disjoint positions} if, for every
pair of words $a_1 \cdots a_k \in L(G)$ and $b_1 \cdots b_k \in L(G)$ we have
that $a_i \neq b_j$ if $i \neq j$.
\begin{theoremrep}\label{theo:FR-ECFG-uniform-disjoint}
  There is a bijection $\beta$ between the set of uFRs with disjoint positions and
  the set of uniform-length
  ECFGs with disjoint
  positions such that each factorized relation $F$ is isomorphic to
  $\beta(F)$.
\end{theoremrep}
\begin{proof}
  We use the same mapping $\beta$ as in the proof of
  Theorem~\ref{theo:FR-ECFG-uniform}. Indeed, we can restrict the domain to the
  uFRs with disjoint positions. From the fact that $\beta$ maps each uFR to an
  isomorphic ECFG, we can conclude that the resulting ECFG also has disjoint
  positions. Furthermore, for every uniform length, non-recursive, and star-free
  ECFG $G$ with disjoint positions, the pre-image $\beta^{-1}(G)$ under $\beta$
  is a uFR with disjoint positions. Thus $\beta$ is also bijective when
  restricted to the domain of uFRs with disjoint positions and the image of
  non-recursive and star-free ECFGs with disjoint positions.
\end{proof}

In the remainder of the paper, for an uFR $F$, we will refer to $\beta(F)$ as
\emph{the CFG corresponding to $F$}.


\section{Some Consequences of the Isomorphism}

We now discuss some immediate consequences of
Theorems~\ref{theo:FR-ECFG-uniform} and \ref{theo:FR-ECFG-uniform-disjoint}. We
note that our list is far from exhaustive. In principle, every result on
context-free grammars that holds for uniform-length languages can be lifted to uFRs.
Since uFRs are a class of CFGs due to Theorem~\ref{theo:FR-ECFG-uniform}, we use
some standard terminology for CFGs to uFRs, e.g., the notion of \emph{derivation
trees}. We call a uFR $F$ \emph{deterministic} if the CFG $G$ corresponding to $F$ is
unambiguous, i.e., every word in $L(G)$ has a unique derivation
tree.\footnote{The definition of an nFR $F$ being deterministic 
  in \cite{OlteanuZ-tods15} says that each monomial that can be obtained from
  using distributivity of product over union is distinct. This is equivalent to
  saying that each tuple in $\sem{F}$ has a unique derivation tree.}

\subsection{Membership}
We define the membership problem for uFRs to be the problem that, given a tuple
$t$ and uFR $F$, tests if $t \in \sem{F}$. The CYK
algorithm \cite{HMU-automata} decides membership  context-free languages in
polynomial time.
\begin{corollary}\label{cor:CYK}
  The membership problem for uFRs is in polynomial time.
\end{corollary}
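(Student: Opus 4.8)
The plan is to reduce the membership problem for uFRs directly to the membership problem for context-free languages, and then invoke the CYK algorithm. Given an input tuple $t = \langle a_1, \ldots, a_k \rangle$ and a uFR $F$, I would first compute the corresponding CFG $\beta(F)$ using the bijection $\beta$ of Theorem~\ref{theo:FR-ECFG-uniform}. This is computable in linear time, since $\beta$ is induced by the identity on \names and merely rewrites each $X$-expression as a regular expression (replacing $\times$ by concatenation and $\langle a\rangle$ by $a$). By the Corollary following the isomorphism, we have $L(\beta(F)) = \word(\sem{F})$, so the question ``is $t \in \sem{F}$?'' is equivalent to ``is $\word(t) = a_1 \cdots a_k \in L(\beta(F))$?''.

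The next step is to convert the ECFG $\beta(F)$ into an ordinary CFG so that CYK applies. By the Remark in Section~3, CFGs and ECFGs have the same expressiveness and can be translated back and forth in linear time; since $\beta(F)$ is non-recursive and star-free, this conversion is unproblematic and preserves the generated language. I would then run the CYK algorithm on the resulting CFG and the word $a_1 \cdots a_k$, which decides membership in polynomial time in the combined size of the grammar and the word. Both $|\word(t)| = k$ and $|\beta(F)|$ are polynomial in the input size $|t| + |F|$ (indeed $|\beta(F)| = |F|$ by the size-preservation part of the isomorphism), so the overall running time is polynomial.

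I anticipate essentially no real obstacle here: the result is an immediate transfer through the isomorphism, and the only point requiring minor care is that CYK is classically stated for grammars in Chomsky normal form. The conversion of an arbitrary (E)CFG to Chomsky normal form is standard and polynomial, so this does not affect the polynomial-time bound. If one prefers to avoid the normal-form conversion entirely, one can instead appeal to any of the standard generalized CYK or Earley-style parsers that handle general context-free grammars in polynomial time directly. Either way, correctness follows from the identity $L(\beta(F)) = \word(\sem{F})$, and the complexity claim follows from the polynomial-time guarantee of the chosen parsing algorithm.
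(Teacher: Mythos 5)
Your proof is correct and follows exactly the route the paper intends: translate the uFR to its isomorphic CFG via $\beta$, reduce tuple membership to word membership using $L(\beta(F)) = \word(\sem{F})$, and apply CYK after a polynomial normal-form conversion (the paper itself notes that this corollary relies on Chomsky Normal Form). No substantive differences to report.
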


\subsection{FRs versus Non-Deterministic Finite Automata}\label{sec:FRvsNFA}
We call a uFR $(S,D)$ \emph{right-linear} (resp., \emph{left-linear}) if every
expression in $D$ is of the form $A := \langle \rangle$ or $A := \{b\} \times C$
for $A,C \in N$ and $b \in T$ (resp., $A := \langle \rangle$ or $A := C \times
\{b\}$). (The corresponding definition for CFGs is analogous.) Due to
\cite{HMU-automata}, we know that right-linear CFGs are isomorphic to
non-deterministic finite automata. 
The isomorphism, formulated in terms of uFRs, is
that a rule $A \to \{b\} \times C$ corresponds to a transition from
state $A$ to state $C$ with label $b$, and a rule $A := \langle  \rangle$
corresponds to $A$ being an accepting state. A state $A$ is a start state if $A$
does not occur on the right-hand side of a rule in $S$. (The argument for
left-linear uFRs is analogous.)
\begin{corollary}\label{cor:rightlinear}
  Right-linear (and left-linear) uFRs are isomorphic to non-deterministic finite
  automata for uniform-length languages.
\end{corollary}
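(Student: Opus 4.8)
The plan is to obtain the statement by composing two isomorphisms: the isomorphism between uFRs and uniform-length ECFGs from Theorem~\ref{theo:FR-ECFG-uniform}, and the classical correspondence between right-linear CFGs and NFAs from \cite{HMU-automata}. First I would check that the bijection $\beta$ preserves right-linearity. Since $\beta$ is induced by the identity on names and translates $\langle \rangle \mapsto \varepsilon$, $\langle b \rangle \mapsto b$, and $\times \mapsto \cdot$, a right-linear uFR rule $A := \langle \rangle$ becomes $A \to \varepsilon$ and a rule $A := \langle b \rangle \times C$ becomes $A \to b\,C$. These are exactly the rules of a right-linear CFG in which every non-$\varepsilon$ rule reads a single terminal, and $\beta^{-1}$ sends such rules back to right-linear uFR rules. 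Hence $\beta$ restricts to a bijection between right-linear uFRs and right-linear uniform-length (non-recursive, star-free) CFGs, and by Theorem~\ref{theo:FR-ECFG-uniform} each right-linear uFR $F$ is still isomorphic to $\beta(F)$.

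Next I would invoke the standard correspondence of \cite{HMU-automata}, made explicit just before the statement: read the nonterminals of such a right-linear CFG as the states of an NFA $M$, a rule $A \to b\,C$ as a transition from $A$ to $C$ labeled $b$, a rule $A \to \varepsilon$ as $A$ being accepting, and the start symbol as the start state. This is a bijection between right-linear CFGs with single-terminal rules and NFAs, and it is structure-preserving in the sense that a derivation $S \Rightarrow^* w$ corresponds exactly to an accepting run on $w$; in particular $L(M) = L(\beta(F))$. Composing this bijection with the restriction of $\beta$ yields the claimed isomorphism between right-linear uFRs and NFAs, under which $L(M) = \word(\sem{F})$.

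It then remains to verify that the image is precisely the NFAs whose language is uniform length. In one direction, $\beta(F)$ is uniform length by Theorem~\ref{theo:FR-ECFG-uniform}, and since the associated $M$ recognizes $L(\beta(F))$, its language is uniform length. In the other direction, given an NFA $M$ with $L(M)$ uniform length I would form the corresponding right-linear CFG $G$ with $L(G) = L(M)$; because a uniform-length language over the finite terminal set is finite, $G$ defines a finite language, so after trimming it is non-recursive and star-free (as already argued earlier in the excerpt). Thus $G$ lies in the domain of $\beta^{-1}$, and $\beta^{-1}(G)$ is a right-linear uFR that maps back to $M$. The left-linear case is fully symmetric, replacing $A := \langle b \rangle \times C$ by $A := C \times \langle b \rangle$ and reading terminals on the other end.

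I expect the main obstacle to be purely definitional rather than technical. The paper has formalized isomorphism between two ECFGs and between a uFR and an ECFG, but not between a uFR (or CFG) and an NFA, so one must first pin down what ``isomorphic to an NFA'' means---nonterminals in bijection with states, rules in bijection with transitions and accepting flags---and then verify that the correspondence of \cite{HMU-automata} is an isomorphism in that sense, which forces one to restrict to trimmed objects to get a clean bijection. The only genuinely delicate point is matching finiteness with acyclicity and reconciling the start-state conventions: a trimmed NFA for a finite (uniform-length) language is acyclic, which is exactly what guarantees the grammar is non-recursive and hence lands in the image of $\beta$, and one should confirm that the ``a state is a start state iff it does not occur on any right-hand side'' convention is consistent here, which it is, because the start name $N_1$ of a uFR never appears on a right-hand side.
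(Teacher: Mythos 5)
Your proposal is correct and follows essentially the same route as the paper, which derives the corollary by combining the uFR--ECFG isomorphism of Theorem~\ref{theo:FR-ECFG-uniform} with the classical right-linear-grammar-to-NFA correspondence of \cite{HMU-automata}, reading a rule $A := \langle b\rangle \times C$ as a $b$-labeled transition from $A$ to $C$ and $A := \langle\rangle$ as $A$ being accepting. Your additional care about what ``isomorphic to an NFA'' formally means, about trimming/acyclicity, and about the start-state convention only makes explicit details the paper leaves implicit.
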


Let us define the \emph{equivalence problem} of uFRs as follows. Given two uFRs
$F_1$ and $F_2$, is $\sem{F_1} = \sem{F_2}$? Likewise, the \emph{containment
  problem} asks, given two uFRs $F_1$ and $F_2$, whether $\sem{F_1} \subseteq
\sem{F_2}$. Due to \cite[Corollary 5.9]{StearnsH85}, we now know
\begin{corollary}
  Equivalence and containment of deterministic right-linear (resp., left-linear)
  uFRs is in polynomial time.
\end{corollary}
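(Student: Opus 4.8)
The plan is to reduce equivalence and containment of deterministic right-linear uFRs to the corresponding problems for \emph{unambiguous} \nfa{}s, for which \cite[Corollary 5.9]{StearnsH85} already provides polynomial-time algorithms. The first step is to make the translation to automata explicit and to check that it preserves the relevant notion of determinism. By Corollary~\ref{cor:rightlinear}, a right-linear uFR $F$ is isomorphic to an \nfa{} $A_F$ with $L(A_F) = \word(\sem{F})$, where a rule $A := \langle b\rangle \times C$ becomes a transition from state $A$ to state $C$ labeled $b$, and a rule $A := \langle \rangle$ marks $A$ as accepting. I would then argue that $A_F$ is unambiguous whenever $F$ is deterministic. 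The key observation is that in a right-linear grammar every sentential form contains at most one nonterminal, so each derivation is forced to be leftmost and is completely determined by the sequence of nonterminals it traverses; this sequence is precisely an accepting run of $A_F$ on the derived word. Hence derivation trees of the corresponding CFG $\beta(F)$ and accepting runs of $A_F$ are in bijection, so $F$ being deterministic (i.e., $\beta(F)$ unambiguous) is equivalent to $A_F$ being unambiguous.

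The second step is to observe that equivalence and containment of the uFRs coincide with the same relations between the languages of the associated automata. Since $\word$ sends a $k$-ary tuple $\langle a_1,\ldots,a_k\rangle$ to the length-$k$ word $a_1\cdots a_k$ and is a bijection between $k$-ary tuples and words of length $k$ over $\V$, we have $\sem{F_1}\subseteq\sem{F_2}$ iff $\word(\sem{F_1})\subseteq\word(\sem{F_2})$ iff $L(A_{F_1})\subseteq L(A_{F_2})$, and analogously with equality in place of containment. Thus the two uFR decision problems translate literally into the containment and equivalence problems for the unambiguous \nfa{}s $A_{F_1}$ and $A_{F_2}$.

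Finally, since the construction of $A_F$ from $F$ is carried out in linear time and (by the first step) preserves unambiguity, I would apply \cite[Corollary 5.9]{StearnsH85} to $A_{F_1}$ and $A_{F_2}$ to conclude that both equivalence and containment are decidable in polynomial time; the left-linear case is symmetric. I expect the only genuinely non-routine point to be verifying that the \emph{syntactic} notion of determinism for uFRs (a unique derivation tree per word) matches the \emph{semantic} unambiguity of the automaton (a unique accepting run per word). Once the ``at most one nonterminal per sentential form'' observation pins this equivalence down, the remainder is a direct transfer of the cited automata-theoretic result.
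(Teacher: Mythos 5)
Your proposal is correct and follows essentially the same route as the paper, which treats this as an immediate consequence of the isomorphism between right-linear uFRs and NFAs (Corollary~\ref{cor:rightlinear}) together with \cite[Corollary 5.9]{StearnsH85}, noting that determinism of uFRs corresponds to unambiguity of the automaton. You merely spell out the steps the paper leaves implicit (the bijection between derivations and runs, and the fact that $\word$ is injective on tuples), and these details are all sound.
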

We recall that determinism in uFRs corresponds to \emph{unambiguity} in
context-free grammars and finite automata. In particular, Corollary 5.9
in~\cite{StearnsH85}, which shows that equivalence and containment of
unambiguous finite automata is solvable in polynomial time is non-trivial. In
fact, the result is even more general: it holds for $k$-ambiguous automata for
every constant $k$. Here, $k$-ambiguity intuitively means that each tuple in
$\sem{F}$ is allowed to have up to $k$ derivation trees in $F$.

\subsection{Size Lower Bounds}

Filmus~\cite{Filmus11} proves a 
lower bound on the size of CFGs that, in terms of uFRs, is stated as follows.
\begin{corollary}[\cite{Filmus11}, Theorem 7]\label{cor:lowerbound}
  Consider the set of tuples $S = \{(a_1,\ldots,a_n,b_1,\ldots,b_n) \in
  \{0,1\}^{2n} \mid a_1 \cdots a_n = b_1 \cdots b_n\}$. Then the smallest uFR
  for $S$ has size $\Omega(t^{n/4} / \sqrt{2n})$.
\end{corollary}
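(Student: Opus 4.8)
The plan is to reduce this directly to the corresponding lower bound for context-free grammars, which is exactly what Filmus proves, using the isomorphism machinery established in Theorem~\ref{theo:FR-ECFG-uniform}. First I would observe that $\word(S)$ is precisely the \emph{copy language} $\{ww \mid w \in \{0,1\}^n\}$ over words of length $2n$: the condition $a_1 \cdots a_n = b_1 \cdots b_n$ says exactly that the second block of $n$ symbols coincides with the first. This is a finite, uniform-length language, so it falls squarely within the scope of our isomorphism results.

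Next, suppose $F$ is any uFR with $\sem{F} = S$. By Theorem~\ref{theo:FR-ECFG-uniform}, the ECFG $\beta(F)$ is uniform length and isomorphic to $F$, and by the size-preservation corollary it has exactly the same size as $F$ and satisfies $L(\beta(F)) = \word(\sem{F}) = \word(S)$. Thus every uFR for $S$ yields, with no change in size, an ECFG generating the copy language of length $2n$. To align with Filmus's setting, I would then apply the linear-time, linear-size translation from ECFGs to CFGs noted in the Remark, so that a uFR of size $s$ for $S$ produces a CFG of size $O(s)$ for $\word(S)$.

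Finally, I would invoke Filmus~\cite{Filmus11}, Theorem~7, which gives the $\Omega(t^{n/4}/\sqrt{2n})$ lower bound on the size of any CFG generating the copy language of length $2n$. Chaining the implications uFR $\to$ ECFG $\to$ CFG, each step of which preserves size up to a constant factor, the same asymptotic lower bound transfers back to uFRs, establishing that the smallest uFR for $S$ has size $\Omega(t^{n/4}/\sqrt{2n})$.

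The main obstacle I anticipate is purely one of bookkeeping about size measures and normal forms: Filmus states his bound for CFGs under a specific convention (for instance, counting rules or symbols, possibly assuming a normal form), whereas the uFR size here counts occurrences of symbols and operators. I would need to verify that the ECFG-to-CFG conversion and any normalization change the size by at most a constant (or at worst polynomial) factor, so that the $\Omega$ bound is not diluted below the claimed expression. Since all conversions involved are linear, I expect this to go through cleanly, with the $\sqrt{2n}$ denominator already absorbing any lower-order slack.
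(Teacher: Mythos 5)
Your proposal is correct and matches the paper's intent exactly: the paper presents this as an immediate consequence of the isomorphism in Theorem~\ref{theo:FR-ECFG-uniform} (which transfers any uFR for $S$ to an equal-size ECFG for the copy language $\word(S)=\{ww\mid w\in\{0,1\}^n\}$) combined with Filmus's CFG lower bound, with the only bookkeeping being the linear ECFG-to-CFG conversion and normal-form considerations that you correctly flag. The paper itself notes that the proof relies on conversion to Chomsky Normal Form, which is precisely the size-measure caveat you anticipate, and it is absorbed by the bound as you say.
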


In fact, the proofs of Corollary~\ref{cor:lowerbound} and \ref{cor:CYK} use the
fact that CFGs (uFRs) can be brought into \emph{Chomsky Normal Form}
\cite{Chomsky-iandc59}, which may be yet another classical result that is useful
for proving results on uFRs.

\subsection{Counting}

Corollary~\ref{cor:rightlinear} allows us to connect recent
results on counting problems for automata and grammars
\cite{ArenasCJR-jacm21,ArenasCJR-stoc21,MeelCM-pods24} to uFRs. For a class
$\mathcal C$ uFRs, \emph{counting for ${\mathcal C}$} is the problem that,
given a uFR $F \in {\mathcal C}$, asks what is the cardinality of $\sem{F}$.
First, recall that the number of words of a given length $n$ in an unambiguous
CFG (or ECFG) can be counted in polynomial time \cite[Section I.5.4]{FS09}
by a simple dynamic programming approach.
\begin{corollary}
  Counting for deterministic right-linear uFRs is in polynomial time.
\end{corollary}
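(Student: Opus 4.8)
The plan is to push the counting problem through the isomorphism of Corollary~\ref{cor:rightlinear} and then invoke the dynamic-programming result recalled just above. First I would take the deterministic right-linear uFR $F=(S,D)$ and pass to its corresponding CFG $G=\beta(F)$, which by Theorem~\ref{theo:FR-ECFG-uniform} is isomorphic to $F$ and, being right-linear, is (isomorphic to) a non-deterministic finite automaton by Corollary~\ref{cor:rightlinear}. Since $F$ is deterministic, $G$ is unambiguous. The corollary stating that isomorphic uFRs and ECFGs satisfy $L(G)=\word(\sem{F})$ gives us that equality, and because $\word$ restricts to a bijection between $k$-ary tuples and words of length $k$ over \values, we obtain $|\sem{F}| = |L(G)|$. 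Thus it suffices to count the words of $G$.

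Next I would use that $G$ is uniform length: every word of $L(G)$ has the same length $n=\ar(S)$, namely the arity of $F$. Hence $|L(G)|$ is exactly the number of length-$n$ words of the unambiguous grammar $G$, which by the dynamic-programming approach of \cite[Section I.5.4]{FS09} can be computed in time polynomial in $|G|=|F|$ and $n$.

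The one point that needs care — and the reason right-linearity is essential — is to bound $n$ polynomially in $|F|$. Because $G$ is right-linear, each derivation step rewrites a single nonterminal and emits exactly one terminal (except for the final $A \to \varepsilon$ step), so $n$ equals the number of steps in any derivation $S \Rightarrow^*_G w$. By Definition~\ref{def:uFR} the names of a uFR are topologically ordered, so the nonterminal-dependency graph of $G$ is acyclic over at most $|F|$ nonterminals; hence every derivation visits each nonterminal at most once and has length at most the number of nonterminals. This gives $n \le |F|$, and substituting into the previous paragraph yields a running time polynomial in $|F|$. I expect this length bound to be the only real obstacle: for a general deterministic uFR the arity can be exponential in the size (as in the doubling uFR $N_i := N_{i+1} \times N_{i+1}$ of the earlier proposition), so the same dynamic program would no longer run in polynomial time; right-linearity is exactly what forces each rule to add a single position to the arity and thereby keeps $n$ linear in $|F|$.
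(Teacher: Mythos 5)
Your proof is correct and follows essentially the same route as the paper, which likewise derives the corollary by viewing the deterministic uFR as an unambiguous (right-linear) grammar and invoking the dynamic-programming word-counting result of \cite[Section I.5.4]{FS09}. Your explicit justification that right-linearity bounds the word length $n$ by $|F|$ --- so that the dynamic program, and indeed the output count itself, stays polynomial in $|F|$ --- is a detail the paper leaves implicit, and it correctly explains why the statement is restricted to the right-linear case.
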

For general right-linear uFRs, the counting problem is $\#$P-complete, but
the following is immediate from the existence of an FPRAS for $\#$NFA~\cite{ArenasCJR-jacm21} and
Theorem~\ref{theo:FR-ECFG-uniform}.
\begin{corollary}
  Counting for right-linear uFRs admits an FPRAS.
\end{corollary}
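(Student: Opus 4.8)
The plan is to reduce counting for a right-linear uFR to the $\#$NFA problem and then invoke the FPRAS of Arenas et al.~\cite{ArenasCJR-jacm21}. First I would take a right-linear uFR $F$ and apply Corollary~\ref{cor:rightlinear} to obtain an isomorphic non-deterministic finite automaton $M$ with $L(M) = \word(\sem{F})$. Since $\word$ is a bijection between $k$-ary tuples and words of length $k$, we have $|\sem{F}| = |L(M)|$, so it suffices to approximately count the words accepted by $M$.

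The observation that keeps everything polynomial is that the common length $k$ of all words in $L(M)$ --- equivalently, the arity of $F$ --- is bounded by $|F|$. Indeed, by Definition~\ref{def:uFR} a uFR is non-recursive, since name references only go forward, so the isomorphic automaton $M$ is acyclic; hence every accepting run, and thus every accepted word, has length at most the number of states, which is at most $|F|$. This lets me feed $k$ in unary to the $\#$NFA machinery without leaving polynomial time in $|F|$.

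I would then invoke the $\#$NFA FPRAS of~\cite{ArenasCJR-jacm21}: given the automaton $M$ and the length $k$ in unary, it approximates $|\{w \in L(M) \mid |w| = k\}|$ to within a multiplicative factor $(1 \pm \varepsilon)$ with probability at least $1 - \delta$, in time polynomial in $|M|$, $k$, $1/\varepsilon$, and $\log(1/\delta)$. Because $L(M)$ is uniform length with length exactly $k$, this count equals $|L(M)| = |\sem{F}|$, and because both $|M|$ and $k$ are polynomial in $|F|$, the overall running time is polynomial in $|F|$, $1/\varepsilon$, and $\log(1/\delta)$. This is precisely an FPRAS for counting right-linear uFRs.

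I do not expect a genuine obstacle, as the statement is essentially a transfer result; the only point that needs care is the one highlighted above, namely confirming that the target length $k$ is polynomially bounded so that the unary encoding required by the $\#$NFA FPRAS does not blow up the running time, which follows from non-recursiveness of uFRs. A secondary sanity check is that the isomorphism of Corollary~\ref{cor:rightlinear} preserves the counted quantity exactly, identifying tuples with words, so the reduction introduces no distortion.
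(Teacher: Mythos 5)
Your proposal is correct and follows exactly the route the paper intends: the paper states the corollary as immediate from the uFR--NFA isomorphism (Corollary~\ref{cor:rightlinear} / Theorem~\ref{theo:FR-ECFG-uniform}) together with the $\#$NFA FPRAS of~\cite{ArenasCJR-jacm21}, which is precisely your reduction. You additionally spell out the one detail the paper leaves implicit --- that the target length $k$ is at most $|F|$ for a right-linear (hence non-recursive, acyclic) uFR, so the unary length parameter does not break polynomiality --- which is a worthwhile sanity check but not a different argument.
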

Furthermore, the recent more efficient FPRAS for $\#$NFA by Meel et
al.~\cite{MeelCM-pods24} can be applied verbatim to counting for right-linear
uFRs. In fact, Meel et al.~\cite{MeelC-arXiv24} recently generalized the result
to $\#CFG$. The result is still unpublished, but it would imply:
\begin{corollary}
  Counting for uFRs admits an FPRAS.
\end{corollary}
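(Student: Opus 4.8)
The plan is to reduce counting for uFRs to $\#$CFG and then invoke the FPRAS of Meel et al.~\cite{MeelC-arXiv24}. By Theorem~\ref{theo:FR-ECFG-uniform}, every uFR $F$ is isomorphic to the uniform-length ECFG $\beta(F)$, and by the corollary stating that isomorphic $F$ and $G$ satisfy $L(G) = \word(\sem{F})$, we have $L(\beta(F)) = \word(\sem{F})$. If $F$ is $k$-ary, then $\word$ restricts to a bijection between the $k$-ary tuples of $\sem{F}$ and the length-$k$ words of $L(\beta(F))$, so that $|\sem{F}| = |L(\beta(F))|$. Hence approximately counting the tuples represented by $F$ is literally the same task as approximately counting the words of the grammar $\beta(F)$.

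First I would turn the ECFG $\beta(F)$ into an equivalent ordinary CFG $G$ using the linear-time, language-preserving translation mentioned in the remark on CFGs and ECFGs; since the language is unchanged, $G$ remains uniform length with common word length $k$ equal to the arity of $F$, and $|L(G)| = |\sem{F}|$. I would then run the $\#$CFG FPRAS on the pair $(G,k)$: as every word of $L(G)$ has length exactly $k$, counting the length-$k$ words of $G$ yields precisely $|\sem{F}|$, and the FPRAS returns a $(1\pm\epsilon)$-approximation of this value with confidence $1-\delta$. The reduction itself is therefore immediate once the isomorphism of Theorem~\ref{theo:FR-ECFG-uniform} is in place.

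The one point that genuinely requires care is the dependence on the arity $k$. For right-linear uFRs the leveled structure of a uniform-length NFA forces $k \le |F|$, so the $\#$NFA-based FPRAS is automatically polynomial in $|F|$. For general uFRs, however, $k$ can be exponential in $|F|$: the family $N_i := N_{i+1} \times N_{i+1}$ with $N_n := \langle 0 \rangle \cup \langle 1 \rangle$ has size $O(n)$ yet arity $2^n$, and its represented relation has doubly exponentially many tuples. Consequently one should read the statement as an FPRAS whose running time is polynomial in $|F|$, in the arity $k$ (equivalently, $k$ supplied in unary), in $1/\epsilon$, and in $\log(1/\delta)$; this is the best one can hope for, since already writing down the approximate count requires $\Theta(k)$ bits. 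With the arity accounted for in this way, the composition of the linear-time conversion and the $\#$CFG FPRAS meets the FPRAS bound, which is exactly the content to establish. The main obstacle is thus not the construction but pinning down the role of the arity so that the approximation guarantee is well defined.
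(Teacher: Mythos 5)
Your proposal matches the paper's (implicit) argument: the corollary is stated as an immediate consequence of the isomorphism of Theorem~\ref{theo:FR-ECFG-uniform} together with the \#CFG FPRAS of Meel et al.~\cite{MeelC-arXiv24}, which is exactly the reduction you carry out (isomorphic uFR and grammar have $|\sem{F}| = |L(\beta(F))|$, convert the ECFG to a CFG in linear time, run the FPRAS at length $k$). Your added caveat that the arity $k$ can be exponential in $|F|$ is a genuine subtlety the paper glosses over, and reading the statement as polynomial in $|F|$ and $k$ is the right fix --- though your justification that the output alone needs $\Theta(k)$ bits is slightly off, since a $(1\pm\epsilon)$-approximation can be written in floating point with $O(\log k + \log(1/\epsilon))$ bits; the actual bottleneck is that the known FPRAS runs in time polynomial in the target word length supplied in unary.
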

Notice that counting for uFRs (and subclasses thereof) is a practically relevant
question: it is the result of the COUNT DISTINCT query for a factorized
representation.

\subsection{Enumeration}
In terms of enumeration, D\"om\"osi shows that, given a context-free grammar $G$
and length $n$, the set of words in $L(G)$ of length $n$ can be enumerated with
delay polynomial in $n$~\cite{Domosi00}:
\begin{corollary}
  Given an uFR $F$, the set of tuples in $\sem{F}$ can be enumerated with delay
  polynomial in the arity of $F$.
\end{corollary}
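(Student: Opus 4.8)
The plan is to reduce the statement directly to D\"om\"osi's enumeration algorithm~\cite{Domosi00} through the isomorphism of Theorem~\ref{theo:FR-ECFG-uniform}. Let $F$ be a uFR of arity $k$, so every tuple of $\sem{F}$ lies in $\V^k$, and let $\beta(F)$ be the corresponding uniform-length ECFG. By the corollary stating that isomorphic $F$ and $G$ satisfy $L(G) = \word(\sem{F})$, we have $L(\beta(F)) = \word(\sem{F})$; since $\word$ sends a $k$-ary tuple to a word of length $k$, every word of $L(\beta(F))$ has length exactly $k$. D\"om\"osi's result is phrased for ordinary CFGs, so I would first invoke the Remark to convert $\beta(F)$ into a CFG $G'$ in linear time. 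This conversion preserves the language, hence $L(G') = \word(\sem{F})$ and $G'$ remains uniform length with common word length $k$.

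I would then run D\"om\"osi's procedure on $G'$ with length parameter $n = k$. Because $L(G')$ is uniform length with common length $k$, the words of length exactly $k$ are precisely all of $L(G')$, so enumerating the length-$k$ words enumerates the whole language $\word(\sem{F})$ with neither omissions nor repetitions. Each emitted word $w = a_1 \cdots a_k$ is translated back into the tuple $\langle a_1,\ldots,a_k \rangle$ before being output; this postprocessing takes $O(k)$ time per word.

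For the delay bound, D\"om\"osi guarantees delay polynomial in the length parameter $n$, which here is the arity $k$ of $F$. Adding the $O(k)$ translation per output keeps the overall delay polynomial in $k$, which is exactly the claimed bound.

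I do not anticipate a real obstacle: the corollary is an almost immediate consequence of the isomorphism combined with a black-box use of D\"om\"osi's algorithm. The only two points needing care are (i) that the common word length equals the arity, which is precisely the content of $\word$ turning $k$-ary tuples into length-$k$ words, and (ii) that converting the ECFG $\beta(F)$ to a CFG preserves uniform length, which holds because it preserves the language. Both are routine.
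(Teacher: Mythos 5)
Your proposal is correct and follows exactly the route the paper intends: the corollary is stated as an immediate consequence of D\"om\"osi's fixed-length enumeration result combined with the isomorphism of Theorem~\ref{theo:FR-ECFG-uniform}, and your write-up merely makes the routine details (language-preserving ECFG-to-CFG conversion, setting the length parameter to the arity, translating words back to tuples) explicit. No gaps.
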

We note that, in the case of right-linear uFRs, more efficient algorithms are
possible \cite{AckermanM09,AckermanS09}. The same holds if the uFR is
deterministic~\cite{Piantadosi}. 
  For example, Mu\~{n}oz and
  Riveros \cite{MunozR22} considered \emph{Enumerable Compact Sets (ECS)} as a
  data structure for output-linear delay algorithms. 
  ECS can be viewed as context-free grammars in Chomsky Normal Form for finite languages
  (using a similar isomorphism as in Section~\ref{sec:isomorphism}). In terms of uFRs,
  \cite{MunozR22}  shows that, if the uFR is deterministic and $k$-bounded (which means
  that unions on right-hand sides have constant length), then all its tuples can
  be enumerated in output-linear delay.

\subsection{FRs for Variable-Length Relations}
\label{sec:variable-length}

We now consider a slightly more liberal relational data model in the sense that
a database relation no longer needs to contain tuples \emph{of the same arity}.
The reason why we consider this case is twofold. First, this data model leads to
representations for the \emph{finite languages}, which is a fundamental class in
formal language theory. Second, this data model is the underlying data
model~\cite{reldocs-relations} for the query language Rel~\cite{rel},
implemented by RelationalAI. So it is used in practice. We say that a
\emph{variable-length database relation} is simply a finite set $R$ of tuples.

The correspondence between uFRs and ECFGs in Theorems~\ref{theo:FR-ECFG-uniform} and
\ref{theo:FR-ECFG-uniform-disjoint} allows us to define a natural generalization of FRs for
variable-length relations, which corresponds to ECFGs for finite languages.
  We inductively define \emph{variable-length $X$-expressions} (\emph{vl-$X$-expression}
for short) $E$ as follows:
\begin{itemize}
\item $E=\emptyset$ is a vl-$X$-expression;
\item $E=\langle \rangle$ is a vl-$X$-expression;
\item for each $a \in \V$, we have that $E=\langle a \rangle$ is an $X$-expression with
  $\ar(E)=1$ (singleton);
\item for each $N\in X$, we have that $E=N$ is a vl-$X$-expression (name reference);
\item for vl-$X$-expressions $E_1, \ldots, E_n$ we have that
  \begin{itemize}
  \item $E=(E_1 \cup \cdots \cup E_n)$ is a vl-$X$-expression (union); and
  \item $E=(E_1 \times \cdots \times E_n)$ is a vl-$X$-expression (Cartesian product).
  \end{itemize}
\end{itemize}
\begin{definition}
A \emph{variable-length factorized relation (vlFR)} is a pair $(S,D)$, where $S \in \names$ is
the start symbol and $D = \{N_1 \defeq E_1, \ldots , N_n \defeq E_n\}$ is a set
of expressions where:
\begin{enumerate}
\item $N_1 = S$ and $\ar(N_1 = k)$;
\item Each $N_i$ is an expression name;
\item Each $E_i$ is an $X_i$-expression for $X_i=\{N_{i+1},\dots,N_{n}\}$; and
\item $\ar(N_{i})=\ar(E_i)$ for all $i=1,\dots,n$.
\end{enumerate}
\end{definition}
The semantics $\sem{F}$ of a variable-length factorized relation $F$ is defined analogously as
for FRs. The difference is that the result is now a variable-length database
relation. 

\begin{theoremrep}\label{theo:FR-ECFG-variable}
  There is a bijection $\beta$ between the set of vlFRs and
  the set ECFGs for finite languages such that each factorized representation $F$ is isomorphic to
  $\beta(F)$. 
\end{theoremrep}
\begin{proof}
  We again use the same mapping $\beta$ as in the proofs of
  Theorems~\ref{theo:FR-ECFG-uniform} and~\ref{theo:FR-ECFG-uniform-disjoint}. The only
  difference is that we extend the domain of $\beta$ to FRs of variable-length
  and the image of $\beta$ to star-free non-recursive ECFGs for finite
  languages. Lemma~\ref{lem:isomorphic} clearly extends to these classes. Thus
  for each FR $F$, $\beta(F)$ is isomorphic fo $F$, even if $F$ is not
  uniform-length. The argumentation that $\beta$ is a bijection is analogous to
  the other cases.
\end{proof}


\section{Path Representations in Graph Databases}\label{sec:pmrs}

\nc{\nid}{\mathsf{NID}}
\nc{\eid}{\mathsf{EID}}
\nc{\bL}{\textbf{L}}
\nc{\pathpred}{\textsf{Path}}
\nc{\paths}{\ensuremath{\mathsf{Paths}}}
\nc{\spaths}{\mathsf{SPaths}}
\nc{\mpaths}{\mathsf{MPaths}}
\nc{\set}{\mathsf{Set}}
\nc{\src}{\mathsf{src}}
\nc{\tgt}{\mathsf{tgt}}
\nc{\pmr}{\text{PMR}\xspace}
\nc{\pmrs}{\text{PMRs}\xspace}

\nc{\multileft}{\ensuremath{\{\!\!\{}}
\nc{\multiright}{\ensuremath{\}\!\!\}}}

We now start exploring the relationship between uFRs and \emph{Path
  Multiset Representations (PMRs)}, which were recently introduced as a succinct
data structure for (multi)sets of paths in graph
databases~\cite{MartensNPRVV-vldb23}.
Several studies
demonstrate that they can drastically speed up evaluation of queries that
involve regular path queries with path variables
\cite{MartensNPRVV-vldb23,pathfinder,pathfinderdemo}.

We briefly introduce PMRs and explain their connection to finite automata. This
allows us to relate them to uFRs and study some size tradeoffs later in this section.

\subsection{Path Multiset Representations}
We use
edge-labeled multigraphs as our abstraction of a graph database.\footnote{PMRs
  were originally defined on \emph{property graphs}, which are more complex
  than edge-labeled graphs. The definition of PMRs presented here is therefore
  slightly simplified.} A \emph{graph database} is a tuple $G = (N,E,\lab)$,
where $N$ is a finite set of \emph{nodes}, $E \subseteq N \times N$ is a finite
set of \emph{edges}, and $\lab \colon E \to \values$
is a function that associates a
label to each edge. A \emph{path} in $G$ is a sequence of nodes
$u_0,\ldots,u_n$, where $(u_{i-1},u_{i}) \in E$ for every $i \in [n]$.
\begin{definition}
  A \emph{path multiset representation (PMR)} over a graph database $G =
  (N_G,E_G,\lab_G)$ is a tuple $R = (N,E,\gamma,S,T)$ where
  \begin{itemize}
  \item $N$ is a finite set of nodes;
  \item $E \subseteq N \times N$ is a finite set of edges;
  \item $\gamma : N \to N_G$ is a homomorphism (that is, if $(u,v) \in E$,
    then $(\gamma(u),\gamma(v)) \in E_G$);
  \item $S \subseteq N$ is a finite set of \emph{start nodes};
  \item $T \subseteq E$ is a finite set of \emph{target nodes}.
  \end{itemize}
\end{definition}

The semantics of PMRs is defined as follows. They can be used as a
representation of a \emph{set} or a \emph{multiset} of paths. More precisely, we
define $\spaths(R) = \{\gamma(p) \mid p$ is a path from some node in $S$ to some
node in $T$ in $R\}$. Notice that each $\gamma(p)$ is indeed a path in $G$,
since $\gamma$ is a homomorphism. $\mpaths(R)$ is defined similarly, but it is a multiset,
where the multiplicity of each path $p$ in $G$ is the number of paths $p'$ in
$R$ such that $\gamma(p') = p$.

\begin{figure}
  \begin{subcaptionblock}{.3\textwidth}
    \centering
    \begin{tikzpicture}[state/.style={circle,draw,inner sep=1pt}]
      \node[state] (A) {$A$};
      \node[state] (B) at ($(A)+(1.3,.7)$) {$B$};
      \node[state] (C) at ($(A)+(1.3,-.7)$) {$C$};
      \node[state] (D) at ($(B)+(1.3,0)$) {$D$};
      \node[state] (E) at ($(D)+(0,-1.4)$) {$E$};

      \path[-latex]
      (A) edge[bend left] node[above left]{$a$} (B)
      (B) edge[bend left] node[right]{$a$} (C)
      (C) edge[bend left] node[below left]{$a$} (A)
      (B) edge node[above]{$a$} (D)
      (D) edge node[right]{$a$} (E)
      (C) edge node[above]{$a$} (E)
      ;
    \end{tikzpicture}
    \caption{An edge-labeled graph $G$\label{fig:pmr-a}}
  \end{subcaptionblock}
  \qquad
  \begin{subcaptionblock}{.5\textwidth}
    \begin{tikzpicture}[state/.style={rectangle,rounded corners,draw}]
      \node[state] (B2) {$B_2$};
      \node[state] (C2) at ($(B2)+(1.3,.7)$)  {$C_2$};
      \node[state,fill=yellow] (A1) at ($(C2)+(1.3,0)$)   {$A_1$};
      \node[state] (B1) at ($(A1)+(1.3,-.7)$) {$B_1$};
      \node[state] (C1) at ($(A1)+(0,-1.4)$)  {$C_1$};
      \node[state] (A2) at ($(C2)+(0,-1.4)$)  {$A_2$};
      \node[state,fill=Green] (D) at ($(B1)+(1.8,0)$)      {$D$};
      \node[state] (B3) at ($(B1)+(0.7,.7)$)  {$B_3$};

      \path[-latex]
      (A1) edge[bend left] (B1)
      (B1) edge[bend left] (C1)
      (C1) edge (A2)
      (A2) edge[bend left] (B2)
      (B2) edge[bend left] (C2)
      (C2) edge  (A1)
      (A1) edge[bend left]  (B3)
      (B1) edge  (D)
      (B3) edge  (D)
      ;
    \end{tikzpicture}
    \caption{A PMR $R$ of the paths of even length from $A$ to $D$ (with multiplicity
      two for the shortest path)\label{fig:pmr-b}}
  \end{subcaptionblock}
  \caption{An edge-labeled graph $G$ and a PMR for a multiset of its paths\label{fig:pmr}}
\end{figure}

\begin{example}
  Consider the graph in Figure~\ref{fig:pmr-a}. Figure~\ref{fig:pmr-b} depicts a
  PMR $R$ representing the set of paths from $A$ to $D$ in $G$ that have even
  length. The homomorphism $\gamma$ simply matches both nodes $A_1$ and $A_2$ to
  $A$; and similarly for the other indexed nodes. We define $S = \{A_1\}$ and $T
  = \{D\}$.  Following our definition, we
  now have that $\spaths(R)$ is indeed the set of paths from $A$ to $D$ in $G$
  that have even length. In $\mpaths(R)$, the shortest such path (having length
  two) has multiplicity two, whereas all other paths have multiplicity one.
\end{example}

\subsection{PMRs versus Finite Automata}

PMRs are closely connected to finite automata by design.
One reason for this design choice is that graph pattern matching in languages such as Cypher, SQL/PGQ,
and GQL starts with the evaluation of \emph{regular path queries}, which match
``regular'' sets of paths.

We explain this connection next and assume familiarity with finite automata. In the following, we will denote
\emph{nondeterministic finite automata (NFAs)} as tuples $A = (\Sigma, Q, \delta, I,
F)$ where $\Sigma$ is its \emph{symbols} (or \emph{alphabet}), $Q$ its finite set of \emph{states},
$\delta$ its set of \emph{transitions} of the form $q_1 \xrightarrow{a} q_2$
(meaning that, in state $q_1$, the automaton can go to state $q_2$ by reading
the symbol $a$), $I \subseteq Q$ its set of \emph{initial states}, and $F
\subseteq Q$ its set of \emph{accepting states}. As usual, we denote by $L(A)$
the \emph{language} of $A$, which is the set of words accepted by $A$.

The connection between PMRs and NFAs is very close.
Indeed, we can turn a PMR $R = (N,E,\gamma,S,T)$ over graph $G = (N_G,E_G,\lab)$
into an NFA $N_R = (\Sigma, Q, \delta, I, F)$ where
\begin{enumerate}
\item the alphabet $\Sigma$ is the set $N_G$ of nodes of $G$;
\item the set $Q$ of states is $N \cup \{s\}$, where we assume $s \notin N$;
\item for every edge $e = (u,v) \in E$, there is a transition
  $u \xrightarrow{\gamma(v)} v$;
\item for every node $u \in S$, we have a transition $s
  \xrightarrow{\gamma(u)} u$;
\item $I = \{s\}$; and $F = T$.
\end{enumerate}
As usual, we denote by $L(N_R)$ the set of words accepted by $N_R$. The
automaton $N_R$ accepts precisely the set of paths represented by $R$.
\begin{propositionrep}[Implicit in \cite{MartensNPRVV-vldb23}]\label{prop:PMR-vs-automaton-set}
  $L(N_R) = \spaths(R)$.
\end{propositionrep}
\begin{proof}
  This is a direct corollary of
  Proposition~\ref{prop:PMR-vs-automaton-multiset}. Therefore we refer to that proof.
\end{proof}
In fact, there also exists a \emph{multiset} language of NFAs, denoted $ML(A)$, in which the
multiplicity of each word $w \in ML(A)$ is the number of accepting runs that $A$
has on $w$. Analogously to Proposition~\ref{prop:PMR-vs-automaton-set}, one can
show the following.
\begin{propositionrep}[Implicit in \cite{MartensNPRVV-vldb23}]\label{prop:PMR-vs-automaton-multiset}
  $\mathit{ML}(N_R) = \mpaths(R)$.
\end{propositionrep}
\begin{proof}
  We start with the $\subseteq$ direction. Assume that $v_1 \cdots v_n$ is
  accepted by $N_R$. By construction there is an accepting run
  \[
    s \xrightarrow{\gamma(u_1)} u_1 \xrightarrow{\gamma(u_2)} u_2 \xrightarrow{\gamma(u_3)} \cdots
    \xrightarrow{\gamma(u_n)} u_n\;,
  \]
  such that $\gamma(u_i)=v_i$ for $1 \leq i \leq n$. Therefore, the edges
  $(u_1,u_2),\dots,(u_{n_1},u_n)$ are in $E$. Thus $\gamma(u_1) \cdots
  \gamma(u_n)=v_1 \cdots v_n$ is in $\mpaths(R)$. Furthermore, distinct
  accepting runs for $v_1 \cdots v_n$ in $N_R$, translate to distinct paths in
  $R$. Thus the multiplicity of $v_1 \cdots n_v$ in $\mpaths(R)$ is at least as
  big as the number of accepting runs for $v_1 \cdots v_n$ in $N_R$.

  We continue with the $\supseteq$ direction, which works analogously. Assume
  that $v_1 \cdots v_n$ is a path in $\mpaths(R)$. Then there are edges
  $(u_1,u_2), \dots (u_{n_1},u_n)$ in $E$ such that $\gamma(u_i)=v_i$ for $1
  \leq i \leq n$.. By construction, there is an accepting run 
  \[
    s \xrightarrow{\gamma(u_1)} u_1 \xrightarrow{\gamma(u_2)} u_2 \xrightarrow{\gamma(u_3)} \cdots
    \xrightarrow{\gamma(u_n)} u_n
  \]
  in $N_R$. Therefore the word $\gamma(u_1) \cdots \gamma(u_n) = v_1 \cdots v_n$
  is accepted by $N_R$. Furthermore, distinct paths in $R$ that are mapped to
  the same path $v_1 \cdots v_n$ in $G$ translate to distinct runs in $N_R$.
  Thus the multiplicity of $v_1 \cdots v_n$ in $\mathit{ML}(N_R)$ is at least as big as
  in $\mpaths(R)$.

  Taking both directions together, we have shown $\mathit{ML}(N_R) = \mpaths(R)$.
\end{proof}

The correspondence in Proposition~\ref{prop:PMR-vs-automaton-multiset} is
interesting for the purposes of representing path multisets, because deciding
for given NFAs $A_1$ and $A_2$ if $\mathit{ML}(A_1) = \mathit{ML}(A_2)$ is in polynomial
time~\cite{Tzeng-ipl96} if the NFAs do not have $\varepsilon$-transitions,
which is the case here. Deciding if $L(A_1) = L(A_2)$ on the other hand is
\pspace-complete \cite{MeyerS-focs72}.
(The same complexities hold for the respective containment problems.)
This can be interesting if we want to consider questions like finding
optimal-size representations of a (multi)set of paths.

\subsection{Comparing uFRs and PMRs}
In this section, we identify a path $u_1,\ldots,u_n$ with the database tuple
$(u_1,\ldots,u_n)$. Furthermore, for an uFR $F$ and PMR $R$, we compare
$\sem{F}$ with $\spaths(R)$, i.e., we compare them under set semantics. (A
similar comparison can be made when considering them both under bag semantics.)
Under these assumptions, we have the following observation.
\begin{proposition}
  PMRs are strictly more expressive than uFRs.
\end{proposition}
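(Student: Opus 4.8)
The plan is to establish the two halves of "strictly more expressive" separately. First I would show that every uFR can be simulated by a PMR, and then I would exhibit a PMR whose path set cannot be captured by any uFR. The key structural fact I would lean on is the chain of isomorphisms developed earlier in the excerpt: by Corollary~\ref{cor:rightlinear}, right-linear uFRs are isomorphic to NFAs for uniform-length languages, and by Proposition~\ref{prop:PMR-vs-automaton-set}, PMRs correspond exactly to NFAs via $L(N_R) = \spaths(R)$. So the comparison between uFRs and PMRs is really a comparison between (uniform-length) context-free grammars on one side and finite automata on the other.

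For the containment direction (every uFR is expressible as a PMR), I would argue as follows. Given a uFR $F$, Theorem~\ref{theo:FR-ECFG-uniform} tells us $\beta(F)$ is a uniform-length ECFG and $\word(\sem{F}) = L(\beta(F))$. Since $\sem{F}$ is a finite set of tuples of fixed arity, $L(\beta(F))$ is a finite language, hence regular. I would then take any NFA $A$ with $L(A) = \word(\sem{F})$ (the words being the tuples of $\sem{F}$ read as paths) and convert it into a PMR $R$ over the complete graph on the relevant value set, choosing $\gamma$ so that each state maps to the node given by its outgoing symbol, mirroring the PMR-to-NFA construction in reverse. By Proposition~\ref{prop:PMR-vs-automaton-set}, $\spaths(R) = L(A) = \sem{F}$ (under the path/tuple identification fixed at the start of the subsection). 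This shows PMRs are \emph{at least} as expressive as uFRs.

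For strictness I would produce a single PMR representing an infinite path set and observe that no uFR can match it. Every uFR $F$ represents a finite database relation $\sem{F}$ (a uFR is built from finitely many finite expressions and is non-recursive), so $\sem{F}$ is always finite. A PMR, however, can represent infinitely many paths: I would take any PMR whose underlying digraph contains a directed cycle reachable from a start node and co-reachable to a target node --- for instance the PMR in Figure~\ref{fig:pmr-b}, which represents the infinite set of even-length paths from $A$ to $D$. Its $\spaths$ set is infinite, so no uFR can represent it, and strictness follows.

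The main obstacle I expect is not conceptual but bookkeeping: I must be careful about the correspondence between \emph{paths} (sequences of nodes) and \emph{tuples/words}, since the PMR-to-NFA translation labels the transition into a node $u$ by $\gamma(u)$ rather than by an edge label, and the uFR-to-ECFG isomorphism reads a tuple $\langle a_1,\dots,a_k\rangle$ as the word $a_1\cdots a_k$. I would need to check that these two encodings line up under the identification of a path $u_1,\dots,u_n$ with the tuple $(u_1,\dots,u_n)$ fixed at the start of the subsection, so that "same language" really does mean "same represented object" on both sides. Once that alignment is pinned down, both directions are routine; the finiteness-versus-infiniteness gap does all the real work for strictness.
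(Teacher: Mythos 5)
Your proposal is correct and follows essentially the same route as the paper's own (much terser) proof: containment because every uFR denotes a finite relation and every finite language is regular, hence PMR-representable; strictness because a PMR with a reachable, co-reachable cycle represents an infinite path set that no uFR can match. The extra bookkeeping you flag about aligning the path/tuple and $\gamma$-labelling conventions is sound but not something the paper's proof dwells on.
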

\begin{proof} Every uFR represents a finite relation, which can be represented
  by a PMR (in formal language terms: every finite language is regular).
  Furthermore, PMRs can represent some infinite relations, namely those whose
  corresponding word language is regular~\cite{MartensNPRVV-vldb23}.
\end{proof}

It is interesting to compare the relative size of PMRs and uFRs. Indeed, most practical query languages (e.g., GQL, Cypher)
use keywords to ensure that the sets of paths to be considered in graph
pattern matching are finite (SHORTEST, TRAIL, SIMPLE, ACYCLIC).\footnote{It is an
interesting question if such keywords that force finite number of paths are
indeed always needed, and PMRs show one way to finitely represent an infinite number of
paths.}
This means that, when using these languages, one can in principle use PMRs as
well as uFRs to represent sets of paths.


Using uFRs to represent sets of paths in graph database systems opens up a wide
array of questions. More precisely, context-free grammars (CFG), unambiguous context-free
grammars (UCFG), non-deterministic finite automata (NFA), unambiguous finite
automata (UFA), and deterministic finite automata (DFA) for finite-length
languages are all special cases of
uFRs and are all able to represent some set of $n$ tuples using a representation of
size $O(\log n)$. All these formalism have different
properties. (E.g., counting for UFA is easy by~\cite[Corollary 5.9]{StearnsH85}, but
$\#$P-complete for NFA, we know that we can convert an NFA into a DFA in
exponential time, etc.) So, which representation can be used in which case?
This question actually calls for an investigation that is too extensive for one
paper --- here we investigate the size tradeoffs between the different models.

\begin{figure}
  \centering
  \begin{tikzpicture}[node distance = 2.5cm]
    \node (CFG) {CFG};
    \node [right of=CFG]  (UCFG) {UCFG};
    \node [below of=CFG] (NFA) {NFA};
    \node [right of=NFA]  (UFA) {UFA};
    \node [right of=UFA]  (DFA) {DFA};
    \node [right of=DFA]  (Trie) {Set};

    \path[->,very thick,red]
    (CFG) edge (UFA)
    (UCFG) edge [bend left] (DFA)
    ;

    \path[->,dashed,very thick,red]
    (CFG) edge (UCFG)
    ;

    \path[->,very thick,blue]
    (CFG) edge (NFA)
    (NFA) edge (UFA)
    (NFA) edge[dashed,out=60,in=210] (UCFG)
    (UCFG) edge[out=240,in=30] (NFA)
    (UCFG) edge (UFA)
    (UFA) edge (DFA)
    (DFA) edge (Trie)
    ;
  \end{tikzpicture}
  \caption{Worst-case unavoidable blow-ups for succinct representations of
    uniform length relations. Every path that consists of only blue edges
    represents an unavoidable exponential blow-up and every path that contains
    at least one red (solid) edge represents an unavoidable
    double exponential blow-up. If there is no path, then there exists a linear
    translation. For the dashed edges, we only prove an upper bound. The
    corresponding lower bounds are conditional on Conjecture~\ref{conjecture:UCFG}.}
\end{figure}

\paragraph*{Uniform-Length Relations}

We say that there is an \emph{$f(n)$-size translation} from one model $X$ to a
model $Y$, if there exists a translation from $X$ to $Y$ such that each object
of size $n$ in $X$ can be translated to an equivalent object of size $O(f(n))$
in $Y$. We say that these translations are \emph{tight} if there is an
infinite family of objects in $X$ in which, for each object of size $n$, the
smallest equivalent object in $Y$ has size $\Omega(f(n))$. When $F$ is a set of
functions (such as the exponential or double-exponential functions), we say that
there's an $F$-translation if there exists a function $f \in F$ such that there
is an $f(n)$-size translation. Again, we say that the translations are tight, if
there is a function $f \in F$ for which the translation is tight.

\begin{toappendix}
  To keep the proofs somewhat compact, we first introduce the standard
  translations between language models and only keep the size bounds for the
  proofs.

  We start with the translation from non-recursive and star-free ECFGs to
  star-free regular expressions. Let $G=(T,N,S,R)$ be some non-recursive and
  star-free ECFG. We first, convert $G$ into an ECFG $G'$ that only uses a
  single non-terminal $S$ as follows: As long, as the expression of $S$ contains
  some nonterminal $N$, we replace $N$ by its expression. As the grammar is
  non-recursive, this process terminates. The resulting regular expression $E$
  of $S$ satisfies $L(E)=L(G)$.

  We continue with the Glushkov construction of NFAs from regular expressions.
  Let $R$ be some regular expression. We let $Q$ be the set of all alphabetic
  symbols in $R$ and one additional state $q_0$ that is the initial state. We
  add a transition $q_0 \xrightarrow{a} q_a$ from $q_0$ to some occurrence $q_a
  \in Q$ of the $a$-symbol, if this symbol can be matched at the beginning of a
  word. Similarly we add a transition $q_a \xrightarrow{b} q_b$ from some
  occurrence $q_a$ of an $a$-symbol to some occurrence $q_b$ of a $b$-symbol, if
  the specific occurrence of the $b$-symbol can be matched after the specific
  occurrence of the $a$-symbol in some word. The accepting states are all
  occurrences of symbols that can be matched to the end of some word.

  We already introduced UCFGs and UFAs. We now introduce unambiguous (star-free)
  regular expressions. A star-free regular expressions $R$ is unambiguous, if one
  of the following is true:
  \begin{itemize}
  \item $R$ is an atomic regular expression;
  \item $R$ is $R_1 \cup R_2$, where $R_1$ and $R_2$ are unambiguous regular
    expressions that satisfy $L(R_1) \cap L(R_2) = \emptyset$; or
  \item $R$ is $R_1 \cdot R_2$, where $R_1$ and $R_2$ are unambiguous regular
    expressions such that there are no words $u_1u_2=v_1v_2$ with $u_1\neq v_1$
    and $u_1,v_1 \in L(R_1)$ and $u_2, v_2 \in L(R_2)$.
  \end{itemize}

  We note that an ECFG is unambiguous if and only if its associated regular
  expression (by the construction above) is unambiguous.
  \end{toappendix}

\begin{theoremrep}\label{theo:size-uniform}
  Over uniform-length languages, there are exponential translations
  \begin{enumerate}[(E1)]
  \item from CFG to NFA;
  \item from UCFG to UFA;
  \item from NFA to UCFG;
  \item from NFA to UFA;
  \item from UFA to DFA;
  \item from DFA to Set; and
  \item from NFA to Set.
  \end{enumerate}
 There are double-exponential translations
  \begin{enumerate}[(DE1)]
  \item from CFG to UFA;
  \item from CFG to UCFG;
  \item from UCFG to DFA; and
  \item from CFG to Set.
  \end{enumerate}
  Furthermore, these translations are tight for (E1--E2,E4--E7) and (DE1,DE3--DE4).
\end{theoremrep}
We conjecture that the translations for (DE2) and (E3) are also tight. In fact,
they are tight under Conjecture~\ref{conjecture:UCFG}. To the best of our
knowledge, the literature does not yet have well-developed methods for proving
size lower bounds for UCFGs. A proof of Conjecture~\ref{conjecture:UCFG} using
communication complexity has
recently been claimed by Mengel and Vinall-Smeeth~\cite{MengelVS25}.
\begin{proof}
  We start with the upper bounds. There cannot be a blow-up that is more than
  double-exponential even starting from the most succinct representation by a
  CFG. If a CFG accept any word that is longer than exponential in the number of
  non-terminals of the grammar, then by the pumping lemma of context free
  languages, the grammar also accepts words of arbitrary length, contradicting
  the assumption that all words have uniform length. Furthermore, there are not
  more than doubly exponential many words of exponential length. Therefore,
  there can never be a more than double-exponential blowup and we conclude (DE1)
  to (DE3).

  We now describe single exponential translations for (E1) to (E6). We start
  with (E1). The translation described above from ECFGs to regular expressions
  has an at most exponential blowup, as the nesting depth of $G$ is bounded by
  the number of non-terminals. We can convert $E$ into some NFA $A$ with
  linearly many states in $|E|$ using the well known Glushkov construction.

  For (E2), we can use the very same construction. Indeed, if the grammar $G$ is
  unambiguous, then the resulting expression $E$ is also unambiguous. It is well
  known that the Glushkov automaton for an unambiguous regular expression is
  unambiguous.

  We continue with (E7). Indeed an NFA $A$ for a finite language can only accept
  strings up to length $|A|$ according to the pumping lemma. As there are only
  most exponentially many different words of linear length, the set has at most
  exponential size.

  For (E3--6) we observe that we can easily convert any set to a linear size
  DFA or UCFG. The upper bound follows from (E7).

  We now continue to show the lower bounds. We start with (DE1). Consider the
  family of languages 
  \[
    L^1_n = \{(a+b)^{k}a(a+b)^{2^n}a(a+b)^{2^n-k} \mid k,n \in \nat, k \leq 2^n\}
  \]
  With other words, $L^1_n$ is the language of all words of length $2^{2n}+2$ that
  have two $a$ positions with distance exactly $2^n+1$.

  Intuitively, a CFG for $L^1_n$ guesses the $a$-positions and then verifies that
  at these positions there are indeed $a$-symbols. Formally, we use the
  following CFG $G^1_n$ that accepts the language $L^1_n$:

  \begin{align*}
    A_i &\;\;\rightarrow\;\; B_{i-1} A_{i-1} + A_{i-1}B_{i-1} & \text{for all }0 < i \leq n \\
    A_0 &\;\;\rightarrow\;\;  B_0aB_na + aB_naB_0  \\
    B_i &\;\;\rightarrow\;\; B_{i-1}B_{i-1} & \text{for all } 0 < i \leq n  \\
    B_0&\;\;\rightarrow\;\; a + b 
  \end{align*}
  We use $A_n$ as starting symbol. The rule for $A_0$ constructs the middle part
  of the word ensuring there are two positions with $a$-symbols with exactly
  $2^n$ positions in between. We note that $B_n$ produces all strings of length $2^n$.
  The rules for $A_0$ to $A_k$ add $2^n$ additional positions to the left and right.

  In~\cite[Theorem~5]{Leung2005}, it is shown that every UFA for the language $L^1_n$ needs at
  least doubly exponentially many states proving (DE1).
  
  We conjecture (see Conjecture~\ref{conjecture:UCFG}) that the smallest UCFG
  for $L^1_n$ is also of double exponential size in $n$. If this conjecture holds
  true, the double exponential blow-up (DE2) is tight. Furthermore this would
  also yield an exponential lower bound for the translation (E3), as there is an
  NFA of exponential size in $n$ for $L^1_n$, i.e., exponentially smaller than the
  conjectured lower bound for UCFGs.
  
  We continue with (DE3). We consider the family of languages \[L^2_n = (a+b)^i a
  (a+b)^{2^n}c(a+b)^{2^n - i-1}\;.\] This is the language of all words of
  length $2^{2n}+1$ which have exactly one $c$ and $2^n+1$ positions left of the
  $c$ there has to be an $a$. Any DFA for $L^2_n$ needs doubly exponentially many
  states in $n$, as it needs to remember the last $2^n$ positions. We now provide
  an unambiguous grammar $G^2_n$ for $L^2_n$ of linear size in $n$ proving the double exponential blowup.
  \begin{align*}
    A_i &\;\;\rightarrow\;\; B_{i-1} A_{i-1} + A_{i-1}B_{i-1} & \text{for all }0 < i \leq n \\
    A_0 &\;\;\rightarrow\;\;  aB_nc \\
    B_i &\;\;\rightarrow\;\; B_{i-1}B_{i-1} & \text{for all } 0 < i \leq n  \\
    B_0&\;\;\rightarrow\;\; a + b 
  \end{align*}
  We use $A_n$ as starting symbol. The rule for $A_0$ constructs the middle part
  of the word ensuring there is exactly one $c$-position and that there is an
  $a$-position in the correct distance to the left. The rules for $A_1$ to $A_n$
  add $2^n-1$ additional positions to the left and right.

  We now argue that $G^2_n$ is unambiguous, i.e., that for any string in $L^2_n$
  there is exactly one syntax tree in $G^2_n$. We note that only the rules for
  $A_i$ with $i>0$ and the rule for $B_0$ have a choice. For a given input
  string $w$, the unique $c$-position in $w$ determines which of the two choices
  has to be taken for every $A_i$-rule in much the same way as a given number
  identifies the values for all positions in its binary representation. Finally,
  the choice in each application of the rule for $B_0$ is uniquely determined by
  the alphabet symbol at the corresponding position in the string.

  The lower bound for (DE4) easily follows from (DE1). Alternatively one could
  use the language $(a+b)^{2^n}$ for which there is a CFG of linear
  size in $n$.

  We continue with the lower bound for (E1) and (E2). We consider the family of
  languages $L^3_n=\{ww^{\mathsf{REV}} \mid w \in (a+b)^n\}$, where
  $w^{\mathsf{REV}}$ denotes the reversal of $w$. The language $L^3_n$ is easily
  derived by the following grammar $G^3_n$:

  \begin{align*}
    A_i &\;\;\rightarrow\;\; aA_{i-1}a + bA_{i-1}b & \text{for all } 1 < i \leq n  \\
    A_1&\;\;\rightarrow\;\; aa + bb
  \end{align*}

  The grammar $G^3_n$ is an (unambiguous) context free grammar for the language
  $L^3_n$. As any automaton for $L^3_n$ needs at least $2^n$ states to remember the
  word $w$, we get a lower bound for the translation from (U)CFG to NFA and
  therefore also to UFA. This shows the lower bounds for (E1) and (E2).

  For (E4) we use the class of languages $L^4_n = \{(a+b)^k a (a+b)^n a
  (a+b)^{n-k} \mid k,n \in \nat, k \leq n \}$. One can construct a
  linear size NFA for $L^4_n$ which guesses the $a$-position that has another
  $a$-position in the correct distance. In~\cite[Theorem~5]{Leung2005} it is shown that every
  UFA for $L^4_n$ needs to have exponentially many states in $n$.
  
  For (E5) we use the class of languages $L^5_n=\{(a+b)^{i}a(a+b)^nc(a+b)^{n-i}
  \mid 0 \leq i \leq n\}$. Any DFA for $L^5_n$ needs at least $2^n$ states to
  remember up to $n$ symbols. However a UFA can guess at which position the $c$
  occurs and then verify both, the guess and that $a$ occurs in the correct distance.
  This is a minor variation of the classic text book example showing that UFAs
  can be exponentially more succinct than DFAs adapted to uniform-length
  languages.

  The lower bounds for (E6) and (E7) clearly follow using the class of languages
  $L^6_n=(a+b)^n$ that contain all words of length $n$.
\end{proof}

\begin{conjecture}\label{conjecture:UCFG}
  For each $n \in \nat$, the smallest unambiguous
  context-free grammar for the language \[L_n = \{(a+b)^{k}a(a+b)^na(a+b)^{n-k} \mid
    k \leq n\}\] of words of length $2n+2$ has size
  $2^{\Omega(n)}$.
\end{conjecture}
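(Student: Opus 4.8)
The plan is to convert the UCFG size lower bound into a rank (equivalently, rectangle-partition) lower bound in communication complexity, following the route that already settles the unambiguous-automaton analogue $L^4_n$ from \cite{Leung2005}. First I would cut every word $w \in L_n$ at its midpoint: since $|w| = 2n+2$, write $w = uv$ with $u,v \in \{a,b\}^{n+1}$. Identifying each half with its set of $a$-positions $S_u, S_v \subseteq [n+1]$, one checks that the two special $a$'s (at positions $k+1$ and $k+n+2$) align exactly when $S_u$ and $S_v$ share the index $k+1$, so $w \in L_n$ iff $S_u \cap S_v \neq \emptyset$. Thus membership in $L_n$ is precisely \emph{set-intersection}, and the associated $2^{n+1}\times 2^{n+1}$ matrix $M$ satisfies $M[S_u,S_v] = 1$ iff $S_u \cap S_v \neq \emptyset$. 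Its complement is the disjointness matrix, a tensor power of the rank-$2$ matrix $\left(\begin{smallmatrix}1&1\\1&0\end{smallmatrix}\right)$, so $\operatorname{rank}(M) \ge 2^{n+1}-1$.

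As a warm-up and sanity check, this already reproves the \emph{UFA} bound cleanly. A UFA with $s$ states, read left to right, assigns to each accepted $uv$ the state reached at the midpoint cut; unambiguity forces this state to be unique, and the $1$-entries of $M$ then split into at most $s$ \emph{disjoint} all-ones rectangles (one per cut-state). Since a partition into $r$ all-ones rectangles writes $M$ as a sum of $r$ rank-$1$ $0/1$ matrices, $\operatorname{rank}(M)\le r\le s$, giving $s\ge 2^{n+1}-1$. The goal is to push the same conclusion from UFA up to UCFG.

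For the grammar, I would first bring the UCFG to Chomsky Normal Form \cite{Chomsky-iandc59} (polynomial blow-up, preserving unambiguity). In the unique parse tree of each $w$, a standard balanced-separator argument yields a canonical node whose yield is a contiguous interval $w[i..j]$ of length in $[L/3,2L/3]$, where $L=2n+2$; I record its nonterminal $A$ and its boundaries $(i,j)$. By unambiguity, once $(A,i,j)$ is fixed, $w$ splits bijectively into an \emph{outside} part (a context derivation of $w[1..i{-}1]\,A\,w[j{+}1..L]$) and an \emph{inside} part ($A \Rightarrow^* w[i..j]$); and because the canonical choice depends only on the interval and the shape of the outside, for fixed $(i,j)$ the words realizing interval $(i,j)$ with label $A$ form a combinatorial rectangle in the \emph{re-indexed} matrix $M_{i,j}$ whose rows are outside pairs $(w[1..i{-}1],\,w[j{+}1..L])$ and whose columns are inside words $w[i..j]$. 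Ranging over the $\le |G|$ choices of $A$ then partitions the $1$-entries of each $M_{i,j}$ into at most $|G|$ all-ones rectangles, so $|G| \ge \operatorname{rank}(M_{i,j})$ for every balanced $(i,j)$.

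The hard part will be the final step: showing that for some balanced $(i,j)$ the re-indexed intersection matrix $M_{i,j}$ still has rank (or rectangle-partition number) $2^{\Omega(n)}$. This is the genuine obstacle and the reason the statement is only a conjecture. Unlike the midpoint cut, the inside interval straddles the middle while the outside wraps around both halves, so the left and right symbol blocks are coupled and the clean tensor structure behind $\operatorname{rank}(M)\ge 2^{n+1}-1$ is destroyed: some critical aligned pairs $(p,\,p{+}n{+}1)$ end up split between inside and outside. The plan would be to analyze membership in this interval (``best-partition'') communication model directly --- fix a balanced $(i,j)$, track which critical pairs are split, and exhibit either a large fooling set or a high-rank minor of $M_{i,j}$ for intersection restricted to those pairs. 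Producing such a bound is exactly where current lower-bound technology for unambiguous context-free grammars is lacking, and is the content of the communication-complexity argument recently claimed by Mengel and Vinall-Smeeth~\cite{MengelVS25}.
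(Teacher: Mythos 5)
This statement is a \emph{conjecture} in the paper: the authors do not prove it, they only (i) observe that a proof via communication complexity has been claimed elsewhere by Mengel and Vinall-Smeeth, and (ii) offer indirect evidence by showing, via Ogden's Lemma, that the infinite-length generalization of $L_n$ admits no UCFG at all (i.e., is inherently ambiguous). Your proposal is likewise not a proof, and to your credit you say so explicitly. Still, since the task is to assess whether the statement has been established: it has not. The preliminary reductions you give are sound --- the midpoint cut does turn membership in $L_n$ into set-intersection, the rank of the intersection matrix is indeed $2^{n+1}-1$ via the tensor-power structure of disjointness, and the UFA warm-up correctly recovers the $2^{\Omega(n)}$ bound that the paper imports from Leung for the automaton case (E4)/(DE1). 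But the step that would actually yield the conjecture --- a $2^{\Omega(n)}$ lower bound on the rectangle-partition number (or rank) of the re-indexed matrices $M_{i,j}$ arising from balanced interval cuts --- is exactly the missing content, and you correctly identify it as such.

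One further gap worth flagging in the part you do claim: even granting CNF conversion, your assertion that fixing the canonical balanced node $(A,i,j)$ partitions the $1$-entries of $M_{i,j}$ into at most $|G|$ \emph{disjoint} all-ones rectangles is not justified. Unambiguity gives a unique parse tree per word, but a single parse tree can contain several balanced nodes, and if $(\mathrm{out}_1,\mathrm{in}_1)$ and $(\mathrm{out}_2,\mathrm{in}_2)$ are both canonically assigned to $(A,i,j)$, the recombined word $\mathrm{out}_1\,\mathrm{in}_2$ is in the language but its canonical node need not again be $(A,i,j)$; so the canonical assignment yields a cover by rectangles, not obviously a partition, and the rank argument (which needs a partition) does not go through as stated. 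Without repairing this you would only get a fooling-set/nondeterministic-style bound, which cannot separate UCFG from CFG. This is precisely the kind of technical obstacle that, together with the loss of tensor structure under non-midpoint cuts, keeps the statement at the level of a conjecture in the paper. If you want to cite supporting evidence in the meantime, the paper's own route --- inherent ambiguity of the unbounded-length analogue via Ogden's Lemma --- is available and is a genuinely different (and unconditional, though weaker) kind of argument.
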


One reason why we believe in Conjecture~\ref{conjecture:UCFG} is
because there does not exist a UCFG for the generalized version of the language
to the unbounded length setting.
\begin{propositionrep}
   There does not exist a UCFG for the (infinite) language 
   \[L = \{(a+b)^n a (a+b)^n a (a+b)^{n-k} \mid k,n \in \nat, k \leq n\}\;.\]
 \end{propositionrep}
\begin{proof}
  We use Ogden's Lemma~\cite{Ogden1968} to show the inherent ambiguity of the language.
  We assume that there is an unambiguous CFG $G$ for $L$. 
  Let $n$ be the constant from Ogden's Lemma for $G$. We choose the word
  $b^{n}ab^{n!+3n+2}ab^{n}ab^{n!+n}b$ and mark the first block of $b$-symbols.
  We note that only the first and second $a$-position are in the correct distance.

  According to Ogden's Lemma, there has to be be a derivation
  \[
    S \rightarrow^* uAy \rightarrow^* uvAxy \rightarrow^*  uvwxy\;.
  \]
  We assume that pumping happens only in $b$-blocks, i.e., $a$-symbols cannot be
  pumped. This can be enforced by intersecting $L$ with the regular language of
  all words with exactly three $a$-symbols. Unambiguous languages are closed
  under intersection with regular languages.

  By choosing to mark the first block of $b$-symbols, we ensure that pumping
  actually happens in this block. And by the definition of the language, pumping
  also has to happen in the second block of $b$-symbols as otherwise we can
  reach a word not in $L$ by choosing $uv^0wx^0z$. It is impossible that this
  word is short enough for the second and third $a$-position to be in the
  correct distance and the distance between the first and second $a$-position
  would not change.

  Thus $v$ has to contain at least one and at most $n$ $b$-symbols from the
  first block and $x$ has to contain the same number of $b$-symbols from the
  second block. We can derive the word $b^{n+n!}ab^{2n!+3n+2}ab^{n}ab^{n!+n}b$
  by choosing $uv^{n!/|v|+1}wx^{n!/|v|+1}y$.

  Now we use the word $b^{n!+n}ab^{n}ab^{n!+3n+2}ab^{n}b$ and mark the last
  block of $b$-symbols. Analogous reasoning yields that we can pump the third
  and forth block of $b$-symbols and derive the same word
  $b^{n+n!}ab^{2n!+3n+2}ab^{n}ab^{n!+n}b$. The two derivations are clearly
  distinct. Thus $G$ is not unambiguous contradicting our assumption.
\end{proof}

\begin{toappendix}
  \newcommand{\todisjoint}{{\mathsf{mark}}\xspace}
  In order to denote languages with disjoint positions we use the function
  $\todisjoint$ which, given a uniform length language $L$ over some alphabet
  $\Sigma$ produces a uniform length language with disjoint positions over
  alphabet $\Sigma \times \{0,\dots,n-1\}$, where $n$ is the length of every word
  in $L$. The language $\todisjoint(L)$ results from $L$ by annotating each
  symbol in each word with its position in the word starting from zero.

  We use the following Lemma that bounds the blow-up for translating a grammar
  $G$ for some uniform length language $L$ into a grammar for $\todisjoint(L)$.
  The intuitive idea is to encode the starting position of each subword in each
  non-terminal.
  \begin{lemma}\label{lem:disjoint-grammars}
    Let $L$ be a language such that all words have length $n$, and let $G$ be a
    CFG for $L$. Then there is an CFG $G'$ for $\todisjoint(L)$ of size at most
    $n \cdot |G|$. Furthermore, $G'$ is unambiguous if $G$ is unambiguous.
  \end{lemma}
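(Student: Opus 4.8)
The plan is to build $G'$ by making one copy of each nonterminal of $G$ for every position at which it may begin inside a word, and to hard-wire the position annotations into the right-hand sides of the rules. First I would assume without loss of generality that $G$ is trimmed, which is possible in linear time and preserves both the language and unambiguity. Since $L(G) = L$ is uniform-length of length $n$, the corollary that every nonterminal of a uniform-length grammar is itself uniform-length applies, so each nonterminal $A$ of $G$ derives only words of one fixed length $\ell_A$, with $\ell_S = n$ for the start symbol $S$. For a word $w = a_1 \cdots a_\ell$ and an offset $p \in \nat$, write $\mu_p(w) = (a_1,p)(a_2,p+1)\cdots(a_\ell,p+\ell-1)$, so that $\todisjoint(w) = \mu_0(w)$; the whole point of the construction is to arrange that $A_p$ generates exactly $\mu_p$ of whatever $A$ generates.

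Concretely, the nonterminals of $G'$ are the pairs $A_p$ for $A \in N$ and $0 \le p \le n - \ell_A$, the start symbol is $S_0$, and the terminal alphabet is $\Sigma \times \{0,\dots,n-1\}$. For each production $A \to X_1 \cdots X_k$ of $G$ and each admissible offset $p$, I would add the production $A_p \to Y_1 \cdots Y_k$, where the offset of the $j$th symbol is $q_j = p + \sum_{i<j} \mathsf{len}(X_i)$ (taking $\mathsf{len}$ to be $1$ on terminals and $\ell_B$ on a nonterminal $B$), and $Y_j = (a,q_j)$ if $X_j = a$ is a terminal while $Y_j = B_{q_j}$ if $X_j = B$ is a nonterminal. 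Each rule of $G$ thus spawns at most $n$ rules of $G'$, each having the same size as the original (we only subscript nonterminals and annotate terminals, changing no symbol or operator count), so $|G'| \le n \cdot |G|$ as required.

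Correctness follows from the claim that $A_p \Rightarrow^*_{G'} \mu_p(w)$ if and only if $A \Rightarrow^*_G w$, which I would prove by a routine induction on the derivation; the fixed lengths $\ell_B$ are exactly what guarantee that the offsets assigned to consecutive children abut, so the annotations are globally consistent. Instantiating $A = S$ and $p = 0$ yields $L(G') = \mu_0(L) = \todisjoint(L)$. For the unambiguity claim I would exhibit a bijection between derivation trees: stripping all subscripts and annotations sends a derivation tree of $\mu_0(w)$ in $G'$ to a derivation tree of $w$ in $G$, and conversely the offsets are uniquely recoverable from any $G$-tree, since the span of each node is determined by the $\ell_B$'s. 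Hence each word of $L$ has exactly as many derivation trees in $G'$ as its de-annotation has in $G$, so $G'$ is unambiguous whenever $G$ is. The one point that needs care — more a bookkeeping obstacle than a conceptual one — is checking that the offset arithmetic $q_j = p + \sum_{i<j} \mathsf{len}(X_i)$ is always well defined and stays in range, which is precisely where uniform length (via trimming and the nonterminal-length corollary) is indispensable.
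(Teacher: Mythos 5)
Your construction is exactly the one in the paper: annotate each nonterminal with the offset at which its derived subword begins (well defined because uniform length forces every nonterminal to derive words of a single fixed length), propagate offsets through right-hand sides by cumulative lengths, and observe that each rule spawns at most $n$ copies. Your write-up is correct and in fact supplies details the paper leaves implicit (trimming, the derivation-tree bijection for unambiguity), but the approach is the same.
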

  \begin{proof}
   Let $G=(T,N,S,R)$ be some CFG for $L$. As all words have the same length
    $n$, for each non-terminal $A$ it holds that all words produced by $A$ have
    the same length. We denote this length by $\len(A)$. Now we use $T \times
    \{0,\dots,n-1\}$ as the new set of terminal symbols and $N \times
    {0,\dots,n-1}$ as the new set of non-terminals.
    
    The second entry $i$ of some non-terminal $(A,i)$ shall encode the length of
    the derived word that is left to the subword derived by $(A,i)$. The new
    start symbol is $(S,0)$. For each rule $A \rightarrow \alpha_1 \cdots
    \alpha_k$, we introduce new rules 
    \[(A,i) \rightarrow (\alpha_1,i) (\alpha_2,i+\len(\alpha_1))
      (\alpha_3,i+\len(\alpha_1\alpha_2)) \cdots (\alpha_k, i +
      \len(\alpha_1 \cdots \alpha_{n-1}))\]
    for each $0 \leq i \leq n-(\len(\alpha_1\cdots\alpha_n))$.

    It is straightforward to verify that the resulting grammar indeed accepts
    $\todisjoint(L)$.
  \end{proof}

  As automata for uniform length languages always satisfy that the position is
  implied by the state from which a transition originates, we get the following
  observation:
  \begin{observation}\label{obs:disjoint-automata}
    There is no blow-up for translating finite automata for a language $L$ with
    uniform length to the language $\todisjoint(L)$ (or vice versa).
  \end{observation}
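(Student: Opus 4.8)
The plan is to exploit the fact, already announced in the text preceding the statement, that in any trimmed automaton for a uniform-length language the position being read is completely determined by the state from which a transition originates. First I would take an arbitrary finite automaton $A = (\Sigma,Q,\delta,I,F)$ with $L(A)=L$, where every word of $L$ has length $n$, and trim it so that every state lies on at least one accepting run; trimming only deletes states and transitions and hence never causes a blow-up. For the trimmed automaton I would then argue that every state $q$ has a well-defined \emph{depth} $d(q)$, namely that all runs from an initial state to $q$ have the same length. Indeed, if one run reached $q$ after $i$ steps and another after $j$ steps, then, choosing a run of length $k$ from $q$ to an accepting state (which exists since $q$ is useful), we would obtain accepted words of lengths $i+k$ and $j+k$; as $L$ is uniform of length $n$, this forces $i=j$. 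Consequently every transition $q_1 \xrightarrow{a} q_2$ reads the symbol at position $d(q_1)$ and satisfies $d(q_2)=d(q_1)+1$.

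Given this, the translation from $A$ to an automaton for $\todisjoint(L)$ is immediate: I keep $Q$, $I$, and $F$ unchanged and replace each transition $q_1 \xrightarrow{a} q_2$ by $q_1 \xrightarrow{(a,d(q_1))} q_2$. Since on any accepting run the $m$-th transition leaves a state of depth $m-1$, the resulting automaton accepts exactly the words $(a_1,0)(a_2,1)\cdots(a_n,n-1)$ with $a_1\cdots a_n \in L$, that is, $\todisjoint(L)$, and it has the same number of states and transitions as $A$. For the converse I would start from any (trimmed) automaton for $\todisjoint(L)$ and simply drop the position component of each transition label, turning $q_1 \xrightarrow{(a,i)} q_2$ into $q_1 \xrightarrow{a} q_2$; because forgetting positions is exactly the inverse of $\todisjoint$ on words, the relabeled automaton accepts $L$, again with the same size. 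Hence there is no blow-up in either direction.

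I expect the only substantive step to be the well-definedness of the depth function, i.e.\ establishing that the position is implied by the state; the two relabelings are then routine. It is worth noting that, once both automata are trimmed, each relabeling is a transition-wise bijection that neither merges nor splits transitions (from a state of a fixed depth all outgoing transitions carry the same position annotation), so it also preserves determinism and unambiguity. This is why the absence of blow-up holds uniformly for NFAs, UFAs, and DFAs, and not merely for plain NFAs.
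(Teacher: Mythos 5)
Your proof is correct and takes essentially the same approach as the paper, which justifies the observation in one sentence by noting that in an automaton for a uniform-length language the position is implied by the state; your depth-function argument and the two transition-wise relabelings simply make that remark precise. The added note that the relabelings preserve determinism and unambiguity is a useful detail the paper leaves implicit but relies on in Theorem~\ref{theo:size-uniform-disjoint}.
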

  
\end{toappendix}

\begin{theoremrep}\label{theo:size-uniform-disjoint}
  Over uniform-length languages with disjoint positions, there are exponential
  translations
  
  \noindent
  \begin{tabular}{l@{\hspace{1.5cm}}l}
    \emph{\textbf{(E1)}} from CFG to NFA; & \emph{\textbf{(E5)}} from NFA to UFA;\\
    \emph{\textbf{(E2)}} from UCFG to UFA; & \emph{\textbf{(E6)}} from UFA to DFA;\\
    \emph{\textbf{(E3)}} from CFG to UCFG; & \emph{\textbf{(E7)}} from DFA to Set; and\\
    \emph{\textbf{(E4)}} from NFA to UCFG; & \emph{\textbf{(E8)}} from CFG to Set.
  \end{tabular}
  
  \noindent Furthermore, the translations (E1--E2,E5--E8) are tight.
\end{theoremrep}
 Again, if Conjecture~\ref{conjecture:UCFG} holds true, the translations (E3--4)
 are also tight. 
\begin{proof}
  The upper bounds all follow from
  Observation~\ref{obs:disjoint-positions-blowup}, i.e., from the fact that a
  set of tuples is of at most exponential size in the arity.

  For the lower bounds, we only list the used classes of languages that are defined
  based on languages from the proof of Theorem~\ref{theo:size-uniform}. The
  bounds then follow from the argumentation in the proof of the theorem together
  with Lemma~\ref{lem:disjoint-grammars} and
  Observation~\ref{obs:disjoint-automata}.
  \begin{itemize}
  \item $\todisjoint(L^3_n)$ for (E1) and (E2);
  \item $\todisjoint(L^4_n)$ for (E5);
  \item $\todisjoint(L^5_n)$ for (E6); and
  \item $\todisjoint(L^6_n)$ for (E7) and (E8).
  \end{itemize}

  We exemplify the usage of Lemma~\ref{lem:disjoint-grammars} and
  Observation~\ref{obs:disjoint-automata} on (E1). By (the proof of)
  Theorem~\ref{theo:size-uniform} we know that the smallest NFA for $L^3_n$ is
  exponentially larger than the smallest CFG. All words in the language have
  length linear in $n$. By Lemma~\ref{lem:disjoint-grammars}, we conclude that
  the smallest grammar for $\todisjoint(L^3_n)$ is only linearly larger than the
  smallest grammar for $L^3_n$. By Observation~\ref{obs:disjoint-automata} we
  conclude that the smallest automaton for $L^3_n$ is of the same size as the
  smallest automaton for $\todisjoint(L^3_n)$. Altogether we can conclude that the
  smallest automaton for $\todisjoint(L^3_n)$ is at least exponentially larger than
  the smallest grammar for the same language.

  If Conjecture~\ref{conjecture:UCFG} holds true, we also get exponential lower
  bounds for (E3) and (E4) in an analogous way.
\end{proof}

\subsection{Variable-Length Relations}
Theorems~\ref{theo:size-uniform} and \ref{theo:size-uniform-disjoint} also hold
for variable-length (but finite) relations, as we considered in
Section~\ref{sec:variable-length}. The lower bounds are immediate from those
results and the upper bound constructions are analogous.


\section{Future Work}\label{sec:future}
The connection between database factorization and formal languages gives rise to a plethora of questions for future investigation. What is the complexity of basic operations (e.g., enumeration, counting, and direct access) over compact representations of different formalisms? What is the impact of non-determism on this complexity?  Some questions naturally emerge in the context of answering queries efficiently with compact representations. Given a query and a database, what is the best formal language for representing the result?  What is the impact of the choice of formalism of on our ability to efficiently maintain the query result (as a database view) to accommodate updates in the database?  Specifically, which of the formalisms allow to apply past results on updates of compact representations (e.g.,~\cite{DBLP:conf/pods/AmarilliBMN19,DBLP:conf/icdt/00020OZ23})?  These questions, as well as the connection between factorized relations and path multiset representations, are especially relevant in light of the ongoing efforts on the graph query languages SQL/PGQ and GQL, as these languages combine graph pattern matching (through regular path query evaluation) and relational querying~\cite{DeutschFGHLLLMM-sigmod22,FrancisGGLMMMPR-icdt23}.

In addition to the above, some questions arise independently of the manner (e.g., query) in which the factorized representation is constructed.  What is the complexity of minimizing a representation of each formalism?  What is the tradeoff that the variety of formalisms offers between the size and the complexity of operations? What is the best \emph{lossy} representation if we have a size (space) restriction on the representation? Here, the definition of loss may depend on the application, and one interesting application is \emph{summarization} of a large table via lower-resolution representations, as done by El Gebaly et al.~\cite{DBLP:journals/pvldb/GebalyAGKS14}.


\section*{Acknowledgment}
Part of this work was conducted while Wim Martens was visiting the Simons Institute for the Theory of Computing.

\bibliographystyle{plain}
\bibliography{references}

\begin{toappendix}
\section{The Named and Unnamed Perspectives}\label{app:named}
The principles of databases can be studied in the \emph{named perspective} and
the \emph{unnamed perspective}, which are slightly different mathematical
definitions of the relational model \cite[Chapter 2]{ABLMP21}. Let us provide a bit
of background on both, since the difference between them is important in this
paper.

Let $\values$ and $\attnames$ be two disjoint countably infinite sets of
\emph{values} and \emph{attribute names}. In the \emph{named perspective}, a
\emph{database tuple} is defined as a function $t \colon X \to \values$, where
$X$ is a finite subset of $\attnames$. Such functions $t$ are usually denoted as
$\langle A_1\!:\!a_1, A_2\!:\!a_2, \ldots, A_k\!:\!a_k \rangle$, to say that
$t(A_i) = a_i$ for every $i \in k$. Notice that tuples are unordered in the
sense that $\langle A\!:\!a, B\!:\!b \rangle$ and $\langle B\!:\!b, A\!:\!a
\rangle$ denote the same function. The \emph{arity} of $t$ is $|X|$. A
\emph{database relation} is a set of tuples that are defined over the same set
$X$. A \emph{schema} in the named perspective assigns a finite set of attribute
names to each relation $R$ in the database. The semantics is that each tuple in
$R$ should then be a tuple over the set of attribute names assigned by the
schema.

A \emph{path} in a graph is usually an ordered sequence. (Sometimes containing
only nodes, sometimes only edges, and sometimes nodes and edges depending on the
concrete definition of the graph or multigraph.) Such paths can therefore be
seen as ordered tuples of the form $(u_1,\ldots,u_n)$, where $u_1,\ldots,u_n$
are objects in the graph.

In the \emph{unnamed perspective}, a \emph{($k$-ary) database tuple} is simply
an element of $\V^k$. A \emph{database relation} is a
finite set of tuples of the same arity. We define schemas in the unnamed
perspective a bit differently from the standard definition \cite[Chapter
2]{ABLMP21}, in order to be closer to the named perspective. (In the standard
definition, a schema simply assigns an \emph{arity} to each relation name $R$. The
semantics is that each tuple in $R$ should have the arity that is assigned by
the schema.)
\end{toappendix}

\end{document}